\DeclareMathOperator{\sign}{sgn}
\DeclareMathOperator*{\argmax}{arg\,max}
\renewcommand{\bot}[1]{\accentset{\circ}{#1}}
\newtheorem{proposition}{Proposition}
\title{An Optimal Bearing-Only-Information Strategy for\\ Unmanned Aircraft Collision Avoidance}
\author{Timothy L.~Molloy \footnote{Advance Queensland Research Fellow, School of Electrical Engineering and Computer Science; t.molloy@qut.edu.au}}
\affil{Queensland University of Technology (QUT), Brisbane, Queensland, 4000, Australia}
\author{Tristan Perez \footnote{Research Program Lead, Autonomous Systems.}}
\author{Brendan P.~Williams \footnote{Associate Technical Fellow, Autonomous Systems.}}
\affil{Boeing Research \& Technology Australia, St Lucia, Queensland, 4072, Australia}
\newcommand\blfootnote[1]{%
  \begingroup
  \renewcommand\thefootnote{}\footnote{#1}%
  \addtocounter{footnote}{-1}%
  \endgroup
}
\begin{document}

\maketitle

\blfootnote{Copyright © 2020 by the American Institute of Aeronautics and Astronautics, Inc. All rights reserved.}

\begin{abstract}
This paper presents a novel collision avoidance strategy for unmanned aircraft detect and avoid that requires only information about the relative bearing angle between an aircraft and hazard.
It is shown that this bearing-only strategy can be conceived as the solution to a novel differential game formulation of collision avoidance, and has several intuitive properties including maximising the instantaneous range acceleration in situations where the hazard is stationary or has a finite turn rate.
The performance of the bearing-only strategy is illustrated in simulations based on test cases drawn from draft minimum operating performance standards for unmanned aircraft detect and avoid systems.
\end{abstract}




\section{Introduction}

\lettrine{R}{outine} operations of unmanned aircraft in non-segregated airspace are currently restricted due to the ongoing challenge of ensuring that they will not increase or disrupt the safety risk of other users sharing the airspace \cite{Clothier2015}.
Detect and avoid (DAA) systems are a key technology for controlling the risk of mid-air collision posed by unmanned aircraft \cite{Yu2015,Clothier2015,Mcfadyen2016,Prats2012}.
The development of DAA systems capable of satisfying emerging minimum operating performance standards (e.g., \cite{RTCA2016}), however, remains a significant challenge, with a key open problem being the selection of collision avoidance manoeuvres given only partial information about the state (i.e. the position and velocity) of collision hazards \cite{Yu2015,Prats2012}.
Indeed, despite the potential for DAA systems to fuse information from multiple sensors to obtain complete hazard state information (cf.~\cite{Manfredi2016,FASANO2015,Ramasamy2014}), the time-critical nature of collision avoidance and the potential for failures or delays in sensor, communication, and navigation systems (including late or delayed hazard detections) implies that DAA systems will face situations in which they must select and commence avoidance manoeuvres with only partial hazard state information \cite{Yu2015,Prats2012,Tabassum2019}.
In this paper, we therefore aim to identify a novel collision avoidance strategy that uses only information about the relative bearing angles to collision hazards (i.e. relative azimuth and elevation angles).
We specifically seek to identify a bearing-only avoidance strategy with optimality properties under a differential game formulation of the collision avoidance problem so as to establish a novel description of the theoretical performance achievable given only bearing information.

\subsection{Motivation}
The International Civil Aviation Organization (ICAO) in \cite{ICAO2015} describes a three-stage approach to unmanned aircraft DAA based on the distance and/or time remaining before a collision.
The three stages (illustrated in Fig.~\ref{fig:layers}) mirror those for manned aircraft (cf.~\cite{ICAO2005a}) and are: strategic conflict management, separation provision (or remain well clear), and collision avoidance.
The strategic conflict management stage is divorced from the specific capabilities of DAA systems and included to emphasise the importance of safe flight planning (e.g., airspace management and traffic synchronisation).
In the separation provision stage, air traffic control and/or DAA systems are expected to manage the tactical process of keeping aircraft away from hazards by at least an appropriate minimum separation time and/or distance \cite{ICAO2015}.
In both the strategic conflict management and separation provision phases, it is therefore possible and appropriate to resolve conflicts with potential hazards whilst taking into account flight objectives such minimising fuel burn or deviation from a desired path.
In the collision avoidance stage however, manoeuvres of ``last resort'' are to be selected and performed (through the use of DAA systems) with the sole aim of preventing an impending collision \cite{ICAO2015}.

\begin{figure}[t!]
    \centering
    \includegraphics[width = 3.2in]{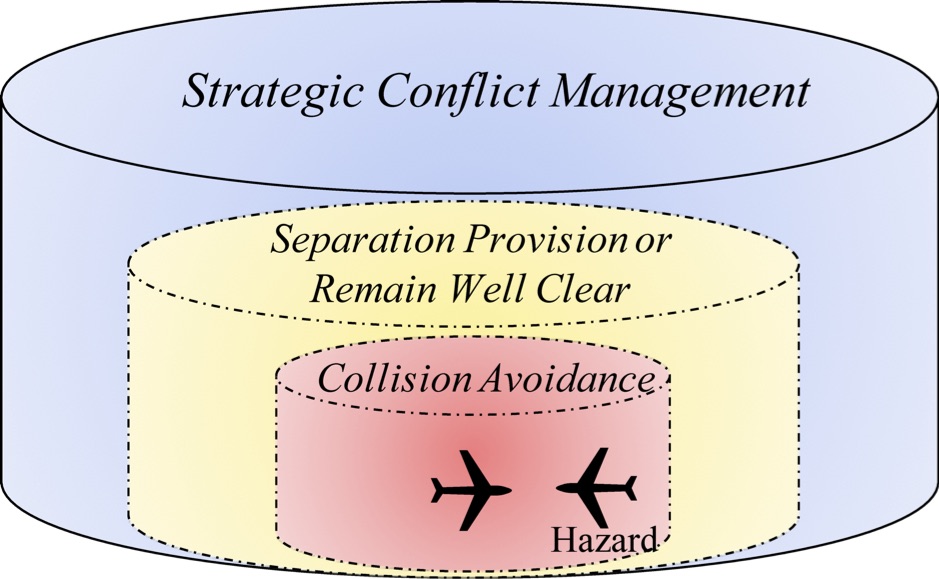}
    \caption{Staged Approach to Detect and Avoid (DAA) described in \cite{ICAO2015}.}
    \label{fig:layers}
\end{figure}

The specifics of when each of the three stages of DAA described in \cite{ICAO2015} is applicable are yet to be finalised, but have been defined in early draft minimum operating performance standards for DAA systems (e.g.~\cite{RTCA2016}) so as to be compatible with the next-generation airborne collision avoidance system (ACAS X) \cite{Kochenderfer2012}.
ACAS X and its unmanned aircraft variant ACAS Xu therefore represent promising candidate technologies for unmanned aircraft DAA \cite{Deaton2020,Manfredi2016}.
However, by the nature of its implementation, ACAS X is reliant on the equipage of the traffic pair with Automatic Dependent Surveillance Broadcast (ADS-B) sensors and so it is only effective for avoiding cooperative and semi-cooperative hazards.
Whilst ACAS Xu promises an ability to avoid noncooperative hazards by using a suite of sensors (e.g., ADS-B, electro-optical, infrared, acoustic, lidar and/or radar sensors), carrying more than one sensor is undesirable (if not impractical) for small to medium unmanned aircraft due to Size, Weight, and Power (SWaP) constraints.
Furthermore, ACAS Xu determines avoidance manoeuvres using optimisation tables that are precomputed numerically offline, and so its deployment raises significant challenges associated with how to \emph{a priori} select discretizations of the state and action (i.e., manoeuvre) spaces together with dynamic models of different hazards and sensor configurations so that the system has suitable available precomputed manoeuvres \cite{Julian2019,Manfredi2016}.

In light of the limitations of ACAS X and ACAS Xu as DAA systems, there remains a strong practical and theoretical impetus to investigate the fundamental performance attainable with a range of standalone sensor technologies for DAA --- especially for small to medium unmanned aircraft.
Electro-optical, infrared, and acoustic sensors all represent strong candidates for DAA due to their ability to detect noncooperative hazards, and since they typically have lower SWaP requirements compared to competing lidar and radar sensors \cite{Yu2015,Mcfadyen2016,Opromolla2019}.
Whilst electro-optical, infrared, and acoustic sensors are capable of producing range estimates if hazards are tracked for prolonged periods of time (cf.~\cite{Molloy2014a,Choi2013,Dippold2009}) or if multiple sensors are employed simultaneously (e.g. stereo vision \cite{Ramani2017}), they only provide direct instantaneous measurements of the bearing angles to potential collision hazards (i.e., azimuth and elevation angles, cf.~\cite{Yu2015,Molloy2017a,James2018a,James2018}).
The time-critical nature of collision avoidance and the potential for failures in sensors therefore means that DAA systems comprised in any way of electro-optical, infrared, and acoustic sensors will almost certainly face situations in which they are forced to select and commence avoidance manoeuvres using only bearing information.
In this paper, we therefore seek to investigate bearing-only collision avoidance for unmanned aircraft DAA.

\subsection{Related Work}
Outside of the specific context of unmanned aircraft DAA, the broader problem of collision avoidance has been studied extensively for all manner of agents including robots \cite{Dippold2009,Van2008,Gunasinghe2019}, aircraft \cite{Sharma2012,Smith2015,Jenie2015,Jenie2016,Tarnopolskaya2009,Tarnopolskaya2010,Exarchos2016,Choi2013,Cichella2018,Garden2017,FULTON2015,Fulton2018,Kuchar2000,Gunasinghe2019}, and marine surface craft \cite{Miloh1976,Merz1972,Merz1973,Johansen2016}.
The dominant approaches to collision avoidance have primarily involved the use of artificial potential fields (see \cite{Kuchar2000} and references therein), geometric arguments in either position or velocity space (e.g., the velocity obstacle methods of \cite{Van2008,Jenie2016,Jenie2015} and the Apollonius circle work of \cite{FULTON2015,Garden2017,Fulton2018}), or the solution of optimal control or differential game problems (see \cite{Tarnopolskaya2009,Tarnopolskaya2010,Exarchos2016,Miloh1976,Merz1972,Merz1973,Johansen2016,Mylvaganam2017,Jha2019,Julian2019} and references therein).
Whilst many of these existing approaches assume the availability of full hazard state information, considerable effort has recently been directed towards developing collision avoidance approaches that operate on the basis of partial hazard state information \cite{Choi2013,Cichella2018,Mcfadyen2016,Sharma2012,Dippold2009,Exarchos2014,Exarchos2015,Exarchos2016}.

Practical collision avoidance approaches operating on the basis of the partial state information provided by a single monocular electro-optical or infrared camera have been recently been developed across robotics and aerospace (see \cite{Choi2013,Cichella2018,Sharma2012,Dippold2009} and the survey paper \cite{Mcfadyen2016} with references therein).
The approaches detailed in \cite{Choi2013,Dippold2009,Mcfadyen2016} involve the use of specialised image processing techniques to extract image-based features (such as optical flow, hazard area expansion, relative bearing, or relative bearing rate) so that the range to the hazard can be estimated.
In contrast, the approaches detailed in \cite{Sharma2012,Cichella2018} use only measurements of the relative bearing to the hazard, but either assume bounds on the unknown range or introduce restrictive assumptions about the speed, shape, or heading of hazards.
The existing approaches of \cite{Choi2013,Cichella2018,Sharma2012,Dippold2009} therefore all fail to constitute pure bearing-only collision avoidance solutions and so suffer from the main drawbacks we have already identified including that delays in estimating range can render them unsuitable for (last resort) collision avoidance.
Furthermore, the approaches based on estimating range from bearing or bearing rate measurements are subject to fundamental limitations related to observability (cf.~\cite{Hepner1990} and \cite{He2018}) and those that rely on specialised image processing techniques can only be validated experimentally (making them less appealing than other approaches as a basis for DAA system development due to potential difficulties that can arise in the certification process).
These existing collision avoidance approaches developed for partial hazard state information therefore lack the rigorous theoretical performance descriptions and guarantees available for more mature collision avoidance approaches (that use full hazard state information) such as the velocity obstacles methods of \cite{Jenie2016,Jenie2015}, the Apollonius circle work of \cite{FULTON2015,Garden2017,Fulton2018}, or the optimal control and differential game approaches of \cite{Tarnopolskaya2009,Tarnopolskaya2010,Exarchos2016,Miloh1976,Merz1972,Merz1973,Mylvaganam2017}.

Most recently, a collision avoidance approach that requires only knowledge of the hazard's speed and relative bearing has been proposed in \cite{Exarchos2016} on the basis of the solution to a novel differential game of pursuit and evasion (see also \cite{Exarchos2014,Exarchos2015}).
The approach of \cite{Exarchos2016} specifically seeks to provide manoeuvres such that a prespecified ``capture'' range threshold is maintained from hazards that are assumed to be agile in the sense of being able to instantaneously change their heading with the aim of reducing the range below the ``capture'' range threshold.
Whilst this pursuit-evasion formulation of collision avoidance is not uncommon in the literature, it differs from the foundational optimal-collision-avoidance formulation proposed by Merz in \cite{Merz1972,Merz1973} and extended and investigated by multiple authors since \cite{Miloh1976,Maurer2012,Tarnopolskaya2009,Tarnopolskaya2010}.
Specifically, the collision avoidance formulation of \cite{Merz1972,Merz1973} involves the maximisation of the minimum range (or miss-distance) from the hazard, which is a natural objective when collision is perceived to be immanent and avoidance action is required after all prespecified separation thresholds have been breached.
Motivated by the limited information required by the approach of \cite{Exarchos2016} and the natural miss-distance performance criterion of \cite{Merz1972,Merz1973}, in this paper, we pose and solve a new differential game formulation of collision avoidance by combining the miss-distance criterion of \cite{Merz1972,Merz1973} with the assumption of an agile hazard as in \cite{Exarchos2016}.
By doing so, we will identify a novel bearing-only collision avoidance strategy with theoretical performance descriptions and guarantees for unmanned aircraft DAA.

\subsection{Contributions}
The key contribution of this paper is the identification of a novel bearing-only collision avoidance strategy with optimality properties under a new differential-game formulation of collision avoidance.
The significance of this contribution is twofold:
\begin{enumerate}
    \item 
    The novel bearing-only strategy represents a crucial guidance technology for collision avoidance in future unmanned aircraft DAA systems that (either by fault or design) may only have access to a single sensor capable of directly measuring bearing angles (e.g. an electro-optical, infrared, or acoustic sensor); and,
    \item
    The development of the bearing-only strategy's optimality properties under a novel differential-game formulation of collision avoidance provides important new theoretical insight into the fundamental performance achievable using only bearing information.
\end{enumerate}
We specifically identify our novel bearing-only collision avoidance strategy by posing and solving a new differential-game formulation of collision avoidance with a miss-distance criterion similar to that in the optimal collision avoidance formulations of \cite{Merz1972,Merz1973,Miloh1976,Maurer2012,Tarnopolskaya2009,Tarnopolskaya2010}, and with an agile hazard similar to that of \cite{Exarchos2014,Exarchos2015,Exarchos2016}.
Complementing the existing formulations of optimal collision avoidance, we pose our differential game in three dimensions and solve it analytically.
Issues associated with the numeric solution of optimal control or differential game problems (such as those encountered by ACAS X and ACAS Xu, cf.~\cite{Julian2019}) are therefore sidestepped.
In this paper, we also provide an illustration of the practical performance of the bearing-only strategy through software-in-the-loop simulations at ranges within which state-of-the-art aircraft detection systems based on electro-optical, infrared, and acoustic sensors are capable of reliable aircraft detections.

The rest of this paper is organised as follows.
In Sec.~\ref{sec:problem}, we formulate collision avoidance in three-dimensions with an agile hazard as a differential game.
In Sec.~\ref{sec:diffGame}, we solve the differential game formulation of collision avoidance.
In Sec.~\ref{sec:properties}, we examine the properties of the solution to the differential game formulation of collision avoidance and recast it as a novel bearing-only collision avoidance strategy with additional optimality and robustness properties for non-agile hazards.
Finally, simulation results for the bearing-only collision avoidance strategy are presented in Sec.~\ref{sec:results} before conclusions are drawn in Sec.~\ref{sec:conclusion}.

\section{Problem Formulation}
\label{sec:problem}

Consider an unmanned aircraft and a ``hazard'' moving in three dimensions.
The (unmanned) aircraft and hazard are initially on closing trajectories with the range between them decreasing and a mid-air collision (or near mid-air collision) being immanent if neither manoeuvres.
The aircraft and hazard are assumed to maintain constant speeds $v_a$ and $v_h$, respectively.
The aircraft is however capable of changing direction by accelerating in a direction normal to its velocity, whilst the hazard is agile in the sense that it can instantaneously change the direction of its velocity vector.
The aircraft's objective is to avoid colliding with the hazard by maximising the minimum range (i.e., miss-distance).
In contrast, the hazard (either deliberately or accidentally) is assumed to manoeuvre so as to minimise the miss-distance.
We seek to identify a collision avoidance strategy for the aircraft that uses only information about the bearing angle to the hazard from the aircraft.

Our motivation for seeking a bearing-only collision avoidance strategy is to provide unmanned aircraft DAA systems with a capability to select manoeuvres of last resort to avoid impending collisions during the collision avoidance stage of DAA (as described in \cite{ICAO2015} and illustrated in Fig.~\ref{fig:layers}).
We shall therefore omit consideration of issues of secondary importance during collision avoidance and those that are only likely to have measurable impacts on longer time scales such as minimising flight path deviation or fuel burn, human factors or payload limits (e.g. if the aircraft has passengers), sensor field of view limitations (e.g. keeping the hazard in the sensor field of view), and the detailed performance and dynamics of the aircraft (e.g. if the aircraft's climb and dive rates are different).
Omission of these secondary and longer term effects is consistent with the previous optimal control and differential game treatments of collision avoidance (cf.~\cite{Tarnopolskaya2009,Tarnopolskaya2010,Exarchos2016,Maurer2012,Miloh1976,Merz1972,Merz1973,Mylvaganam2017}) and has been argued to lead to reasonable abstractions for guidance during close proximity collision encounters (see for example \cite{Maurer2012} and references therein).
Finally, consideration of many of these secondary issues would involve introducing extra constraints into our problem formulation, and so by omitting them, this paper will establish a new theoretical description of the optimal (unconstrained) performance achievable by bearing-only collision avoidance approaches.

\subsection{Three-Dimensional Equations of Motion}
We assume simplified point-mass dynamics for both the aircraft and hazard in three dimensions with the motion described in a coordinate system defined with reference to the aircraft as illustrated in Fig.~\ref{fig:fig3d}.
The relative position of the hazard at time $t \geq 0$ is described by the line-of-sight vector $\bm{R}(t) \in \mathbb{R}^3$ whilst the range between the hazard and the aircraft is given by the magnitude $r(t) \triangleq ||\bm{R}(t)||$.
The velocities of the aircraft and hazard are denoted by $\bm{V}_a(t) \in \mathbb{R}^3$ and $\bm{V}_h(t) \in \mathbb{R}^3$, respectively.
The bearing angle of the hazard as observed from the aircraft will be denoted by $\theta(t)$. Then, 
\begin{align}
    \label{eq:bearingToHazard}
    \cos \theta(t)
    &= \left< \bm{e}_{\bm{V}_a(t)}, \bm{e}_{\bm{R}(t)} \right>
\end{align}
where $\bm{e}_{\bm{x}}$ denotes a unit vector in the direction of a vector $\bm{x} \in \mathbb{R}^3$, and $\left< \cdot, \cdot \right>$ denotes the Euclidean inner product operator.
The bearing angle of the the hazard relative to the line-of-sight vector, denoted by $\phi(t)$, similarly satisfies
\begin{align*}
    \cos \phi(t)
    &= \left< \bm{e}_{\bm{V}_h(t)}, \bm{e}_{\bm{R}(t)} \right>
\end{align*}
and the angle between the aircraft and hazard velocities, denoted by $\psi(t)$, satisfies
\begin{align*}
    \cos \psi(t)
    &= \left< \bm{e}_{\bm{V}_a(t)}, \bm{e}_{\bm{V}_h(t)} \right>.
\end{align*}
The four variables $(r,\theta,\phi,\psi)$ are sufficient to describe the configuration of the system since we assume rotational symmetry around the line-of-sight vector $\bm{R}$ (as in \cite{Miloh1982}).
This assumption is reasonable for initial trajectory planning or for short duration collision encounters (with imminent collisions) without consideration of longer term aerodynamic or gravity effects.
The angles $\theta$, $\phi$, and $\psi$ satisfy $0 \leq \theta,\phi, \psi < \pi$ with $\psi$ also satisfy the bounds
\begin{align}
    \label{eq:psiBounds}
    \begin{cases}
    |\theta - \psi|
    \leq \psi
    \leq \theta + \phi & \text{for} \quad \theta + \phi \leq \pi\\
    |\theta - \psi|
    \leq 2\pi - \psi
    \leq \theta + \phi & \text{for} \quad \theta + \phi \geq \pi.
    \end{cases}
\end{align}

\begin{figure}[t!]
    \centering
    \includegraphics[width = 3.2in]{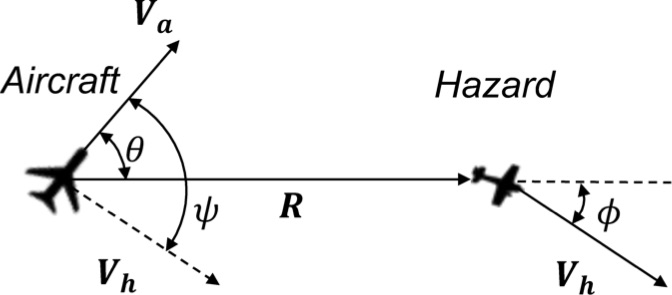}
    \caption{Range and velocity vectors of aircraft and hazard in three dimensions together with relative angles.}
    \label{fig:fig3d}
\end{figure}

Under the assumption of constant speeds $v_a = ||\bm{V}_a||$ and $v_h = ||\bm{V}_h||$, the range $r(t)$ varies according to
\begin{align}
    \label{eq:3dRangeDynamics}
    \dot{r}
    &= v_h \cos \phi - v_a \cos \theta
\end{align}
for $t \geq 0$ where $\dot{r}$ denotes the total derivative of $r(t)$ with respect to time, and here we omit the time arguments for brevity. 
We assume that the aircraft can only accelerate in a direction normal to its instantaneous velocity. 
Then, we define vectors $\bm{W}_a(t) \in \mathbb{R}^3$ along with the bound $||\bm{W}_a(t)|| \leq \bar{w}_a$ so that the aircraft's total acceleration is given by
\begin{align}
    \label{eq:aircraftDynamics}
   \dot{\bm{V}}_a
    &= \bm{W}_a \times \bm{V}_a
\end{align}
at any time $t \geq 0$.
Due to the hazard being agile, it can instantaneously change the direction of its velocity  $\bm{V}_h$, which equivalently corresponds to being able to instantaneously change the relative bearing angle $\phi(t)$ and the relative heading angle $\psi(t)$.
By differentiating \eqref{eq:bearingToHazard} and substituting both \eqref{eq:3dRangeDynamics} and \eqref{eq:aircraftDynamics}, the bearing of the hazard from the aircraft varies according to
\begin{align}
    \label{eq:3dBearingDynamics}
    \begin{aligned}
        \dot{\theta}
        &= \dfrac{1}{r \sin \theta} \left[ v_a \sin^2 (\theta) + v_h \left( \cos \phi \cos \theta - \cos \psi \right) \right]\\
        &\qquad- \sin^{-1}(\theta) \left< \bm{W}_a, \bm{e}_{\bm{V}_a} \times \bm{e}_{\bm{R}} \right>
    \end{aligned}
\end{align}
for $t \geq 0$.
The (relative) motion of the hazard and aircraft is thus described by the states $r$ and $\theta$ satisfying the state equations \eqref{eq:3dRangeDynamics} and \eqref{eq:3dBearingDynamics} with the angles $\phi$ and $\psi$ selected by the hazard, and $\bm{W}_a$ selected by the aircraft.

\subsection{Differential Game Formulation and Bearing-Only Collision Avoidance}
To formulate our collision avoidance problem, we assume that the range is initially decreasing (i.e. that the initial values of $r(0)$ and $\theta(0)$ are such that $\dot{r}(0) < 0$).
The miss-distance is the minimum range $r(T)$ between the aircraft and hazard attained at time $t = T$ when $\dot{r}(t) < 0$ for $t \in [0,T)$ and $\dot{r}(T) = 0$.
The aircraft seeks to maximise the miss-distance $r(T)$ through selection of its acceleration vectors $\bm{W}_a$ whilst the hazard seeks to minimise the miss-distance through selection of the direction of its velocity vector (i.e., the angles $\phi$ and $\psi$).
The aircraft and hazard thus play the differential game defined by the value function
\begin{align}
 \label{eq:3dProblem}
\begin{aligned}
	J(r_0,\theta_0) \triangleq \max_{\bm{W}_a} \min_{\phi,\psi} r(T)
\end{aligned}
\end{align}
where $r(T)$ is the (terminal) cost function of the game and the optimisations are subject to the angle constraints \eqref{eq:psiBounds}, the kinematic constraints \eqref{eq:3dRangeDynamics} and \eqref{eq:3dBearingDynamics}, the control constraint $||\bm{W}_a|| < \bar{w}_a$, the minimum range constraint $\dot{r}(T) = 0$, and the initial conditions $r(0) = r_0$ and $\theta(0) = \theta_0$.

In general, the aircraft strategies (and the terminal time $T$) solving \eqref{eq:3dProblem} could be functions of the four variables $(r,\theta,\phi,\psi)$ that are sufficient to describe the configuration of the game along with the speeds $v_a$ and $v_h$ (cf.~\cite[Section 8.2]{Basar1999} and its discussion of feedback strategies in two-player zero-sum differential games).
However, since the angles $\phi$ and $\psi$ are controlled directly by the agile hazard (and can be instantaneously changed), knowledge of them will intuitively not assist the aircraft in determining its optimal manoeuvres.
Furthermore, the aircraft's objective of maximising the minimum range $r(T)$ places no explicit emphasis on the absolute value of the range at any time $t \in [0,T]$, and so the aircraft is likely to be able to determine its optimal manoeuvres (and the terminal time $T$) with only knowledge of the bearing angle $\theta$ and the speeds $v_a$ and $v_h$.
To determine a bearing-only strategy for collision avoidance, we therefore seek to show that there exists an aircraft strategy with an associated terminal time $T$ (both dependent only on $\theta$) that solves \eqref{eq:3dProblem} over all possible speeds $v_h$ and $v_a$.

\section{Differential Game Solution}
\label{sec:diffGame}

In this section, we solve the differential game \eqref{eq:3dProblem} by exploiting the framework developed in \cite{Isaacs65} (see also \cite[Chapter 8]{Basar1999} and \cite{Merz1971}) and extended in \cite{Miloh1982} for differential games in three dimensions.
We specifically use the Isaacs or Hamilton-Jacobi-Isaacs (HJI) equation and an argument similar to that of \cite{Miloh1982} to reduce the game in three dimensions \eqref{eq:3dProblem} to a differential game in two dimensions.
For this two-dimensional (or planar) differential game, we determine the optimal aircraft and hazard strategies as functions of the partial derivatives of the value function by following the approach of \cite{Basar1999} (see also \cite{Merz1971}).
Using these strategies, we identify the optimal trajectories of the hazard relative to the aircraft for all regions of the game space.


\subsection{Reduction to Planar Problem}
We first note that the order of the minimisations and maximisations in \eqref{eq:3dProblem} is reversible since the kinematics \eqref{eq:3dRangeDynamics} and \eqref{eq:3dBearingDynamics} are separable in the control inputs, and the cost function $r(T)$ is terminal (see the minimax assumption of \cite{Isaacs65}).
The HJI equation for \eqref{eq:3dProblem} is then
\begin{align}
    \label{eq:3dHjb}
    \max_{\bm{W}_a} \min_{\phi,\psi} \left[ J_r \dot{r} + J_\theta \dot{\theta} \right]
    &= 0
\end{align}
for all $t \in [0,T]$ where $J_r$ and $J_\theta$ denote the partial derivatives of the value function $J$ with respect to its first and second arguments, respectively, evaluated at $r(t)$ and $\theta(t)$ (see \cite[Section 8.2.1]{Basar1999} for a detailed derivation of the HJI equation in zero-sum differential games).
Substituting \eqref{eq:3dRangeDynamics} and \eqref{eq:3dBearingDynamics} into \eqref{eq:3dHjb} for $\dot{r}$ and $\dot{\theta}$, we have that the optimal aircraft strategy is to select its acceleration such that
\begin{align}
    \label{eq:3dOptimalAircraftAdjoint}
    \bm{W}_a^*
    &= - \bar{w}_a \sin^{-1}(\theta) \left( \bm{e}_{\bm{V}_a} \times \bm{e}_{\bm{R}} \right) \sign \left( J_\theta \right)
\end{align}
where $\bm{W}_a^*$ denotes the optimal value of $\bm{W}_a$ and 
\begin{align*}
    \sign (J_\theta)
    &\triangleq \begin{cases}
            1 & \text{for } J_\theta > 0 \\
            -1 & \text{for } J_\theta < 0.
       \end{cases}
\end{align*}
Substituting \eqref{eq:3dRangeDynamics} and \eqref{eq:3dBearingDynamics} into \eqref{eq:3dHjb} similarly implies that the hazard's optimal strategy is to select angles $\phi$ and $\psi$ solving
\begin{align*}
    \min_{\phi,\psi} \left[ \cos \phi \left( J_r + \dfrac{1}{r} J_\theta \tan^{-1} \theta \right) + \cos \psi \left( - \dfrac{1}{r} J_\theta \sin^{-1} \theta \right) \right]
\end{align*}
subject to the bounds in \eqref{eq:psiBounds}.
As in \cite[p.~740]{Miloh1982}, the bounds on $\psi$ given by \eqref{eq:psiBounds} combined with application of Isaac's lemma on circular vectograms implies that the hazard's optimal strategy is to select $\phi^*$ and $\psi^*$ such that
\begin{align}
    \label{eq:3dOptimalHazardAdjoint}
    \tan \phi^*
    &= \pm \dfrac{J_\theta}{r J_r}
\end{align}
and either $\psi^* = \theta + \phi^*$ or $\psi^* = |\theta - \phi^*|$ hold.
We note that both \eqref{eq:3dOptimalAircraftAdjoint} and \eqref{eq:3dOptimalHazardAdjoint} hold when $\theta = 0$ since $\sin^{-1}(\theta) \left( \bm{e}_{\bm{V}_a} \times \bm{e}_{\bm{R}} \right) = 1$ when $\theta = 0$.

From \eqref{eq:3dOptimalAircraftAdjoint}, we see that the aircraft's optimal strategy is to accelerate in the plane containing $\bm{V}_a$ and $\bm{R}$ in one of two directions determined by the sign of $J_\theta$.
In view of \eqref{eq:3dOptimalHazardAdjoint} and since $\phi^* = \psi^* - \theta$ when the vectors $\bm{e}_{\bm{R}}$, $\bm{e}_{\bm{V}_a}$, and $\bm{e}_{\bm{V}_h}$ are coplanar (with the angles defined on the interval $[-\pi,\pi]$), the hazard's optimal strategy is to select its velocity $\bm{V}_h$ so that it is coplanar with the line-of-sight vector $\bm{R}$ and the aircraft's velocity $\bm{V}_a$.
Under optimal play, the aircraft and the hazard therefore immediately drive the game to a plane in which $\bm{V}_h$, $\bm{R}$ and $\bm{V}_a$ are coplanar, and neither seeks any deviation from this plane.
The game in three dimensions thus reduces to a game in two dimensions on the plane defined by $\bm{R}$ and $\bm{V}_a$ (which corresponds to the conflict plane used recently in the development of results for aircraft proximity, cf.~\cite{Fulton2018}).
We, therefore, can now proceed to characterise the two-dimensional planar game without any loss of generality.

\subsection{Planar Differential Game Formulation}

To describe the solution of the collision avoidance game in the plane defined by $\bm{R}$ and $\bm{V}_a$, let us cast the game into two dimensions.
Let us define a coordinate system with its $y$-axis aligned with the aircraft's velocity vector, and its $x$-axis completing a right-hand coordinate system (as shown in Fig.~\ref{fig:fig1}).
In this coordinate system, $\theta$ and $\psi$ are measured as positive in a clockwise direction from the $y$-axis, and we have that $\phi = \psi - \theta$.
Substitution of $\phi = \psi - \theta$ into \eqref{eq:3dRangeDynamics} and \eqref{eq:3dBearingDynamics} gives the planar equations of motion
\begin{align}
    \begin{split}
        \label{eq:planarEquationsOfMotion}
        \dot{r}(t)
        &= - \cos \theta(t) + v_h \cos \left( u_h(t) - \theta(t) \right)\\
        \dot{\theta}(t)
        &= -u_a(t) + \dfrac{1}{r(t)} \left[ \sin \theta(t) + v_h \sin \left( u_h(t) - \theta(t) \right) \right]
    \end{split}
\end{align}
where here we have also normalised the kinematics to an aircraft speed of $v_a = 1$.
We also use $u_h \triangleq \psi$ to denote the hazard's control input, and $u_a \in [-1,1]$ to denote the aircraft's planar turning acceleration (which we also normalise to a maximum value of $1$).
A value of $u_a = -1$ represents a maximum-rate left turn whilst $u_a = 1$ corresponds to a maximum-rate right turn.

\begin{figure}[t!]
    \centering
    \includegraphics[width = 3in]{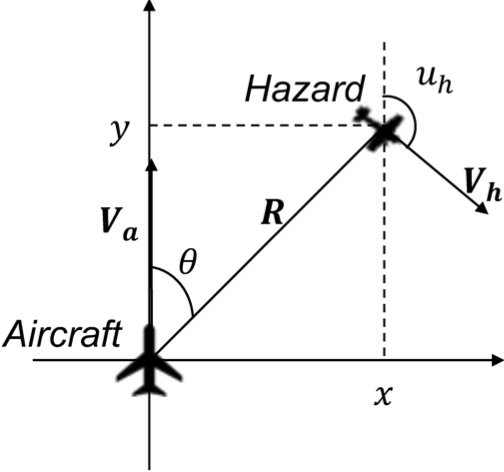}
    \caption{Coordinate system attached to the aircraft in the two-dimensional plane defined by the initial aircraft velocity vector $\bm{V}_a$ and the line-of-sight vector $\bm{R}$.}
    \label{fig:fig1}
\end{figure}

In this coordinate system, the game state reduces to the planar position of the hazard in polar form $(r,\theta)$, and the value function is
\begin{align}
 \label{eq:problemPolar}
\begin{aligned}
V (r_0,\theta_0) \triangleq \max_{u_a} \min_{u_h} r(T)
\end{aligned}
\end{align}
where $r(T)$ is the (terminal) cost function of the game and the optimisations are subject to the constraints
\begin{align*}
\begin{aligned}
& & & \dot{r}(t) = - \cos \theta(t) + v_h \cos \left( u_h(t) - \theta(t) \right), \; r(0) = r_0\\
& & & \dot{\theta}(t) = -u_a(t) + \dfrac{1}{r(t)} \left[ \sin \theta(t) + v_h \sin \left( u_h(t) - \theta(t) \right) \right]\\
& & & \theta(0) = \theta_0\\
& & & u_a(t) \in [-1,1]\\
& & & u_h(t) \in [-\pi,\pi]\\
& & & \dot{r}(T) = 0.
\end{aligned}
\end{align*}
Alternatively, by writing the hazard's position in Cartesian form with $r = \sqrt{x^2 + y^2}$, $x = r \sin \theta$, and $y = r \cos \theta$, the (planar) game can be formulated as
\begin{align}
 \label{eq:problem}
\begin{aligned}
V (x_0,y_0) \triangleq \max_{u_a} \min_{u_h} r(T) = \max_{u_a} \min_{u_h} \sqrt{x^2(T)  + y^2(T) },
\end{aligned}
\end{align}
where the optimisations are subject to the constraints
\begin{align}
\label{eq:problemConstraints}
\begin{aligned}
& & & \dot{x}(t) = -u_a(t) y(t) + v_h \sin u_h(t), \; x(0) = x_0\\
& & & \dot{y}(t) = -1 + u_a(t) x(t) + v_h \cos u_h(t), \; y(0) = y_0\\
& & & u_a(t) \in [-1,1]\\
& & & u_h(t) \in [-\pi,\pi]\\
& & & \dot{r}(T) = 0.
\end{aligned}
\end{align}

\subsection{Optimal Planar Strategies}
To solve the planar game of \eqref{eq:problemPolar} or \eqref{eq:problem}, we first note that as in \eqref{eq:3dProblem}, the order of the minimisations and maximisations in \eqref{eq:problemPolar} and \eqref{eq:problem} are reversible since the kinematics are separable in the control inputs, and the cost function $r(T)$ is terminal (see the minimax assumption of \cite{Isaacs65}).
The HJI equation for \eqref{eq:problem} is then
\begin{align}
    \label{eq:hjb}
    \max_{u_a \in [-1,1]} \min_{u_h \in [-\pi,\pi]} \left[ V_x \dot{x} + V_y \dot{y} \right]
    &= 0
\end{align}
for all $t \in [0,T]$ where $V_x$ and $V_y$ denote the partial derivatives of the value function $V$ with respect to its first and second arguments, respectively, evaluated at $x(t)$ and $y(t)$ (see again \cite[Section 8.2.1]{Basar1999} for a detailed derivation of the HJI equation).
By substituting the expressions for $\dot{x}$ and $\dot{y}$ from \eqref{eq:problemConstraints} into \eqref{eq:hjb}, we have that 
\begin{align}
    \label{eq:hamExpanded}
    0
    &= -V_y + \max_{u_a \in [-1,1]} u_a \left[ V_y x - V_x y \right] + \min_{u_h \in [-\pi,\pi]} \left[ V_x v_h \sin u_h + V_y v_h \cos u_h \right].
\end{align}
The optimal aircraft controls solving \eqref{eq:hamExpanded} therefore satisfy
\begin{align}
    \label{eq:optimalAircraftAdjoint}
    u_a^*(t)
    &= \sign \sigma(t)
\end{align}
where $\sigma(t) \triangleq  V_y x - V_x y$ is a switching function.
The aircraft will thus either turn hard right or hard left depending on the sign of the switching function $\sigma(t)$.
Application of the lemma on circular vectograms of \cite[Lemma 2.8.1]{Isaacs65} to \eqref{eq:hamExpanded} also gives that the optimal strategy for the hazard is to select $u_h$ such that the vector $(\sin u_h,\cos u_h)$ is parallel to the vector $(-V_x, -V_y)$, and so the optimal hazard controls $u_h^*$ satisfy
\begin{align}
    \label{eq:optimalHazardAdjoint}
    V_x
    = - \mu \sin u_h^*(t)
    \quad \text{and} \quad
    V_y 
    = - \mu \cos u_h^*(t)
\end{align}
where $\mu \triangleq \sqrt{V_x^2 + V_y^2} > 0$.

\subsection{Optimal Planar Trajectories}
We now use the optimal strategies described by \eqref{eq:optimalAircraftAdjoint} and \eqref{eq:optimalHazardAdjoint} to compute the optimal trajectories of the hazard.
We shall compute these trajectories by dividing the game space into ``regular'' regions and singular arcs.



\subsubsection{Regular Regions}
In regular regions, the switching function $\sigma(t)$ has a constant sign (i.e., $\sigma(t) > 0$ or $\sigma(t) < 0$) and so the optimal aircraft controls are $u_a^* = \pm 1$.
The value function $V$ also has continuous second derivatives in regular regions so that
\begin{align*}
    \dot{V}_x
    = - \pdv{}{x} \left[ V_x \dot{x}^* + V_y \dot{y}^* \right]
    = - u_a^* V_y
\end{align*}
and
\begin{align*}
    \dot{V}_y
    = - \pdv{}{y} \left[ V_x \dot{x}^* + V_y \dot{y}^* \right]
    = u_a^* V_x
\end{align*}
where $\dot{x}^*$ and $\dot{y}^*$ denote $\dot{x}$ and $\dot{y}$, respectively, evaluated with the optimal aircraft controls $u_a^* = \pm 1$ determined by \eqref{eq:optimalAircraftAdjoint} (cf.~\cite[Section 8.6]{Basar1999}).
In \emph{retrogressive time} where $\tau \triangleq T-t$ and ${}^\circ$ denotes the retrogressive time derivatives of quantities with respect to $\tau$, we therefore have that
\begin{align}
    \label{eq:adjointRetroDiff}
    \bot{V}_x
    = u_a^* V_y
    \quad \text{and} \quad
    \bot{V}_y
    = -u_a^* V_x
\end{align}
in regular regions.
Let us denote the retrogressive time initial conditions associated with these differential equations as $\tilde{u}_a \triangleq u_a^*(T)$, $\tilde{u}_h \triangleq u_h^*(T)$, $\tilde{V}_x \triangleq V_x(x(T),y(T))$, and $\tilde{V}_y \triangleq V_y(x(T),y(T))$.
Solving the (retrogressive time) differential equations \eqref{eq:adjointRetroDiff} in regular regions from (arbitrary) retrogressive time initial conditions $\tilde{u}_a = \pm 1$, $\tilde{u}_h$, $\tilde{V}_x$, and $\tilde{V}_y$, where
\begin{align*}
    \tilde{V}_x
    = - \tilde{\mu} \sin \tilde{u}_h
    \quad \text{and} \quad
    \tilde{V}_y
    = - \tilde{\mu} \cos \tilde{u}_h
\end{align*}
due to \eqref{eq:optimalHazardAdjoint}, gives that
\begin{align}
    \label{eq:adjointSolutions}
    V_x
    = - \tilde{\mu} \sin \left( \tilde{u}_h + \tau u_a^* \right)
    \, \text{and} \,
    V_y
    &= - \tilde{\mu} \cos \left( \tilde{u}_h + \tau u_a^* \right)
\end{align}
with $\tilde{\mu} \triangleq \sqrt{\tilde{V}_x^2 + \tilde{V}_y^2}$.
Comparing these solutions to \eqref{eq:optimalHazardAdjoint} implies that in regular regions the hazard's optimal strategy is to vary the relative heading $u_h$ piecewise linearly in time according to
\begin{align}
    \label{eq:optimalHazardControlViaAdjoint}
    u_h^*(\tau)
    &= \tilde{u}_h + \tau u_a^*.
\end{align}
With this expression for the optimal hazard strategy in regular regions, the equations of motion in \eqref{eq:problemConstraints} can be solved analytically in retrogressive time from arbitrary retrogressive time initial conditions $(\tilde{x},\tilde{y}) \triangleq (x(T),y(T))$.
Solving the equations of motion in \eqref{eq:problemConstraints} analytically in retrogressive time (with extensive algebraic manipulations and use of trigonometric identities), we obtain
\begin{align}
    \label{eq:stateSolutions}
    \begin{aligned}
        x(\tau)
        &= u_a^*(1-\cos \tau) + \tilde{x} \cos \tau + u_a^* \tilde{y} \sin \tau - v_h \tau \sin u_h^* \\
        y(\tau)
        &= (1-u_a^* \tilde{x}) \sin\tau + \tilde{y} \cos \tau - v_h \tau \cos u_h^*
    \end{aligned}
\end{align}
as the solutions to the game in regular regions where $u_h^*$ is given by \eqref{eq:optimalHazardControlViaAdjoint}, and $u_a^*$ is determined by \eqref{eq:optimalAircraftAdjoint} and \eqref{eq:optimalHazardAdjoint} in terms of the gradients $V_x$ and $V_y$ from \eqref{eq:adjointSolutions}.
To evaluate these solutions to the game in regular regions, we must now find the (optimal) terminal conditions $\tilde{V}_x$ and $\tilde{V}_y$, together with the (optimal) terminal controls $\tilde{u}_h = u_h^*(T)$ and $\tilde{u}_a = u_a^*(T)$.

To obtain the terminal conditions $\tilde{V}_x$ and $\tilde{V}_y$, it is convenient to consider the polar form of the game given in \eqref{eq:problemPolar}. 
Let us denote the partial derivatives of the value function $V$ with respect to $r$ and $\theta$ and evaluated at $r(t)$ and $\theta(t)$ as $V_r$ and $V_\theta$, respectively.
By computing the total derivatives of $V$ (in its polar form) with respect to $x$ and $y$, we have that
\begin{align*}
    V_x
    &= V_r \pdv{r}{x} + V_\theta \pdv{\theta}{x}
\end{align*}
and 
\begin{align*}
    V_y
    &= V_r \pdv{r}{y} + V_\theta \pdv{\theta}{y}
\end{align*}
at any $t \in [0,T]$.
The value function $V$ at $t = T$ is $V = r(T)$ and so $V_r  = 1 > 0$ and $V_\theta = 0$ at $t = T$.
Thus at $t = T$, we have the terminal conditions
\begin{align}
    \label{eq:VxTerminalValue}
    \tilde{V}_x
    = V_r \sin \theta(T)
    = \sin \theta(T)
\end{align}
and
\begin{align}
    \label{eq:VyTerminalValue}
    \tilde{V}_y
    = V_r \cos \theta(T)
    = \cos \theta(T).
\end{align}
To obtain the terminal hazard controls $\tilde{u}_h = u_h^*(T)$, we note that the HJI equation \eqref{eq:hjb} for the polar form of the game \eqref{eq:problemPolar} at $t = T$ is
\begin{align*}
    0
    &= \max_{u_a} \min_{u_h} \left[ V_r \dot{r} + V_\theta \dot{\theta} \right]\\
    &= \max_{u_a} \min_{u_h} \dot{r}(T)\\
    &= \min_{u_h} \left[ - \cos \theta(T) + v_h \cos \left( u_h(T) - \theta(T) \right) \right]
\end{align*}
where the second line holds due to $V_r = 1$ and $V_\theta = 0$ at $t = T$ and the last line follows by substituting the definition of $\dot{r}$. 
The minimising terminal hazard control $\tilde{u}_h = u_h^*(T)$ is then
\begin{align}
    \label{eq:hazardTerminalControl}
    u_h^*(T)
    = \theta(T) + \pi
\end{align}
which implies that
\begin{align}
    \label{eq:terminationLine}
	\dot{r}(T)
	= - \cos \theta(T) - v_h
	= 0.
\end{align}
The game therefore terminates with the hazard somewhere along the line $\cos \theta(T) = - v_h$, and pointing directly towards the aircraft.

To determine the aircraft's terminal controls $\tilde{u}_a = u_a^*(T)$, we consider the Cartesian form of the game given in \eqref{eq:problem} and note that by computing the total derivative of $V$ with respect to $\theta$ (in its Cartesian form) we have that
\begin{align*}
    V_\theta
    &= V_x \pdv{x}{\theta} + V_y \pdv{y}{\theta}\\
    &= V_x r \cos \theta - V_y r \sin \theta\\
    &= V_x y - V_y x\\
    &= - \sigma(t).
\end{align*}
We therefore have that $\sigma(T) = 0$ since $V = r(T)$ at $t = T$ implies that $V_\theta = 0$.
Thus, the terminal aircraft controls cannot be determined directly with \eqref{eq:optimalAircraftAdjoint}.
Instead, let us consider
\begin{align*}
    \dv{\sigma(T)}{t}
    = \tilde{V}_x
    = \sin \theta(T)
\end{align*}
where we have used that $u_h^*(T)$ is given by \eqref{eq:hazardTerminalControl} along with the terminal value $\tilde{V}_x$ from \eqref{eq:VxTerminalValue}.
Since $\sigma(T) = 0$, the switching function will satisfy $\sigma(t) > 0$ for $t$ near $T$ when $\dv{\sigma(T)}{t} < 0$ and $\sigma(t) < 0$ for $t$ near $T$ when $\dv{\sigma(T)}{t} > 0$.
For $\theta \in (-\pi, \pi]$, $\dv{\sigma(T)}{t} = \sin \theta(T)$ is negative when $\theta(T) < 0$ and positive when $\theta(T) > 0$.
Thus, the aircraft's controls at $t$ near $T$ satisfy
\begin{align}
    \label{eq:aircraftTerminalControl}
    u_a^*
    &= -\sign \theta(T)
\end{align}
and so when the game terminates with the hazard in the right half plane with $\theta(T) = \acos (-v_h)$, the aircraft is turning left away from the hazard. 
When the game terminates with the hazard in the left half plane with $\theta(T) = -\acos (-v_h)$, the aircraft is turning right away from the hazard. 

With the terminal controls $u_a^*$ and $u_h^*$ determined by \eqref{eq:hazardTerminalControl} and \eqref{eq:aircraftTerminalControl}, and the terminal values of $V_x$ and $V_y$ determined by \eqref{eq:VxTerminalValue} and \eqref{eq:VyTerminalValue}, we can evaluate the equations \eqref{eq:adjointSolutions} -- \eqref{eq:stateSolutions} at any retrogressive time $\tau$. 
These equations are valid from $\tau = 0$ for states $(\tilde{x},\tilde{y})$ on the line of minimum range defined by \eqref{eq:terminationLine} until the sign of $\sigma$ (and hence $u_a^*$) changes.
As illustrated in Fig.~\ref{fig:fig3}, the resulting state trajectories $(x,y)$ fill the game space and describe optimal paths of the hazard in the coordinate system of Fig.~\ref{fig:fig1}.
Along the optimal trajectories, the aircraft does not switch between $u_a^* \pm 1$ except across the $y$-axis.
In standard time $t$, the optimal trajectories begin in the game space and terminate at the lines of minimum range defined by
\begin{align}
    \label{eq:lineOfMinimumRange}
    \cos \theta = - v_h.
\end{align}
Clearly, the lines of minimum range \eqref{eq:lineOfMinimumRange} are only defined when $0 < v_h \leq v_r = 1$, and so the game can only terminate with $r(T) > 0$ when the hazard is slower than the aircraft.
If the hazard is faster than the aircraft (i.e., if $v_h > 1$), then the game terminates at the origin with $r(T) = 0$.

\begin{figure}[t!]
    \centering
    \includegraphics[width = 3in]{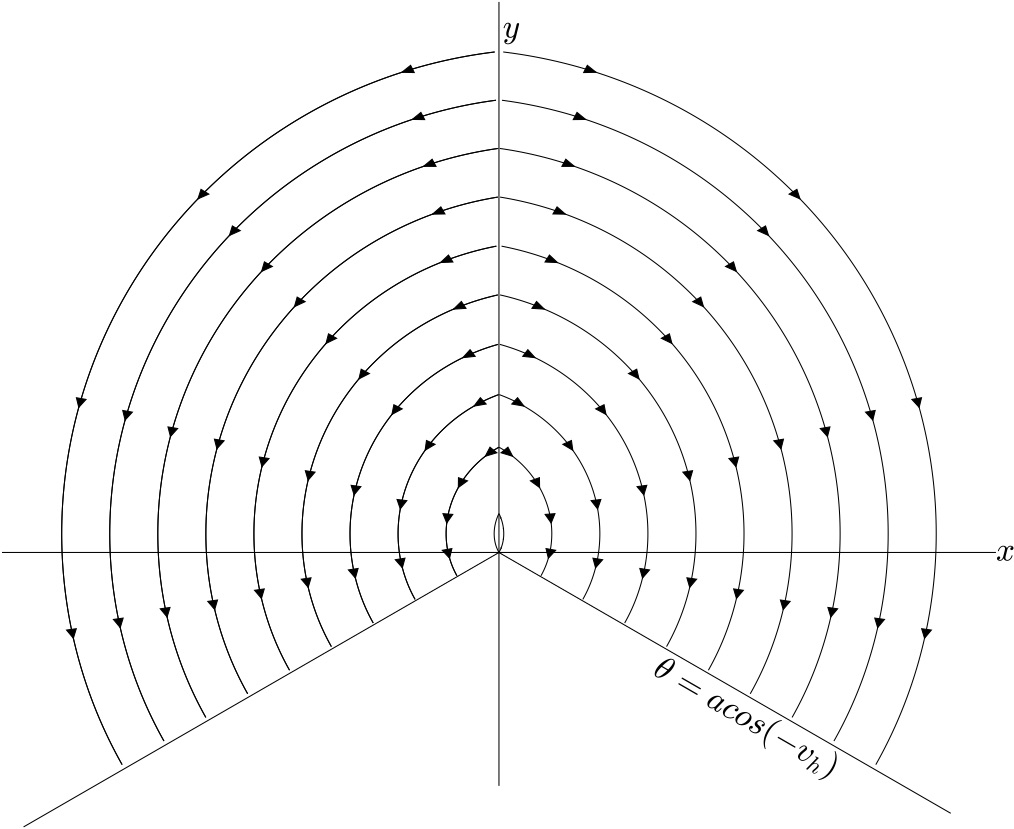}
    \caption{Optimal trajectories of the hazard in the coordinate system of Fig.~\ref{fig:fig1} terminating (forward in time) on the lines of minimum range defined by $\cos \theta = - v_h$. Illustrated trajectories are for $v_h = 0.5$ but are representative of trajectories for all $0 < v_h < 1$.}
    \label{fig:fig3}
\end{figure}

\subsubsection{Singular Arcs}
For completeness, we now check for the existence of singular arcs.
Along singular arcs, the switching function $\sigma(t)$ and its time-derivatives vanish, namely, 
\begin{align}
    \label{eq:singluarSigma}
    \sigma(t)
    &=  V_y x - V_x y= 0
\end{align}
and 
\begin{align}
    \label{eq:singluarSigmaDot}
    \dot{\sigma}(t)
    &= V_x (v_h \cos u_h^* - 1) - V_y v_h \sin u_h^* = 0.
\end{align}
Together \eqref{eq:singluarSigma} and \eqref{eq:singluarSigmaDot} define the (homogeneous) system of linear equations
\begin{align}
    \label{eq:linearSystem}
    \begin{bmatrix}
    -y & x\\
    v_h \cos u_h^*-1 & -v_h \sin u_h^*
    \end{bmatrix}
    \begin{bmatrix}
    V_x\\
    V_y
    \end{bmatrix}
    &= \begin{bmatrix}
    0\\
    0
    \end{bmatrix}.
\end{align}
We note that $V_x$ and $V_y$ cannot be simultaneously zero due to their expressions in \eqref{eq:optimalHazardAdjoint}.
The coefficient matrix in the system of linear equations \eqref{eq:linearSystem} must therefore be singular with its determinant given by
\begin{align*}
    0
    &= x - v_h \left( x \cos u_h^* - y \sin u_h^* \right)\\
    &= x + v_h \left( x V_y - V_x y \right)/\mu\\
    &= x
\end{align*}
where the second equality follows from \eqref{eq:optimalHazardAdjoint} and the last equality follows from \eqref{eq:singluarSigma}.
The only singular arc in the game therefore occurs when $x(t) = 0$.
The game always terminates at or before $x = 0$ and $y \leq 0$ and so we need only resolve the controls when $x = 0$ and $y > 0$.
When $x = 0$ and $y >0$, it is straightforward to observe that the hazard's optimal strategy is simply to point at the aircraft with $u_h^* = \pi$, and the aircraft's optimal strategy is nonunique with $u_a^* = \pm 1$ being equivalent.

\subsection{Optimal Aircraft and Hazard Strategies}
The reduction of the three-dimensional game \eqref{eq:3dProblem} to a game on the plane defined by $\bm{V}_a$ and $\bm{R}$ combined with our analysis of the optimal trajectories of the planar game in Fig.~\ref{fig:fig3} implies that the aircraft's optimal strategy from any region of the game space is to turn away from the hazard in the plane defined by $\bm{V}_a$ and $\bm{R}$ until the relative bearing $\theta$ satisfies $\cos \theta = - v_h$ (or $r(T) = 0$ as in the case $v_h > v_a = 1$).
The aircraft's optimal strategy solving the collision avoidance game \eqref{eq:3dProblem} can therefore be summarised as
\begin{align}
    \label{eq:aircraftRule}
    u_a^*(t)
    = \begin{cases}
            -1  & \text{for } \theta(t) \in (0,\acos(-v_h))\\
            1 & \text{for } \theta(t) \in (-\acos(-v_h),0)\\
            \pm 1 & \text{for } \theta(t) = 0
        \end{cases}
\end{align}
where the turn is in the plane defined by $\bm{V}_a$ and $\bm{R}$, and corresponds to employing the maximum acceleration magnitude of $\bar{w}_a$.

The hazard's optimal response is described by \eqref{eq:optimalHazardControlViaAdjoint} and involves varying the relative heading $u_h^*$ in the plane defined by $\bm{V}_a$ and $\bm{R}$ linearly in time with a constant rate of change determined by (the non-switching) aircraft control $u_a^* = \pm 1$.
As illustrated in Fig.~\ref{fig:fig3}, the optimal trajectories of the hazard in the (planar) coordinate system relative to the aircraft are therefore curved.
However, since the change in the relative heading $u_h^*$ is entirely due to the turning of the aircraft, in an inertial frame with reference point fixed at some arbitrary point in the two-dimensional plane defined by $\bm{V}_a$ and $\bm{R}$, the hazard's optimal trajectory is a straight line.
As established in \eqref{eq:hazardTerminalControl}, the game terminates with the hazard pointing directly towards the aircraft, and so the line that the hazard follows is the line-of-sight vector $\bm{R}(T)$ at the terminal time $T$.
When $v_h < 1$, the line-of-sight vector $\bm{R}(T)$ is therefore aligned with the line of minimum range defined by \eqref{eq:lineOfMinimumRange}.
The hazard's optimal strategy is illustrated in Fig.~\ref{fig:inertialTrajectories} for $v_h = 0.5 < v_a = 1$.

\begin{figure}[t!]
    \centering
    \begin{subfigure}[b]{3in}
        \includegraphics[width=\textwidth]{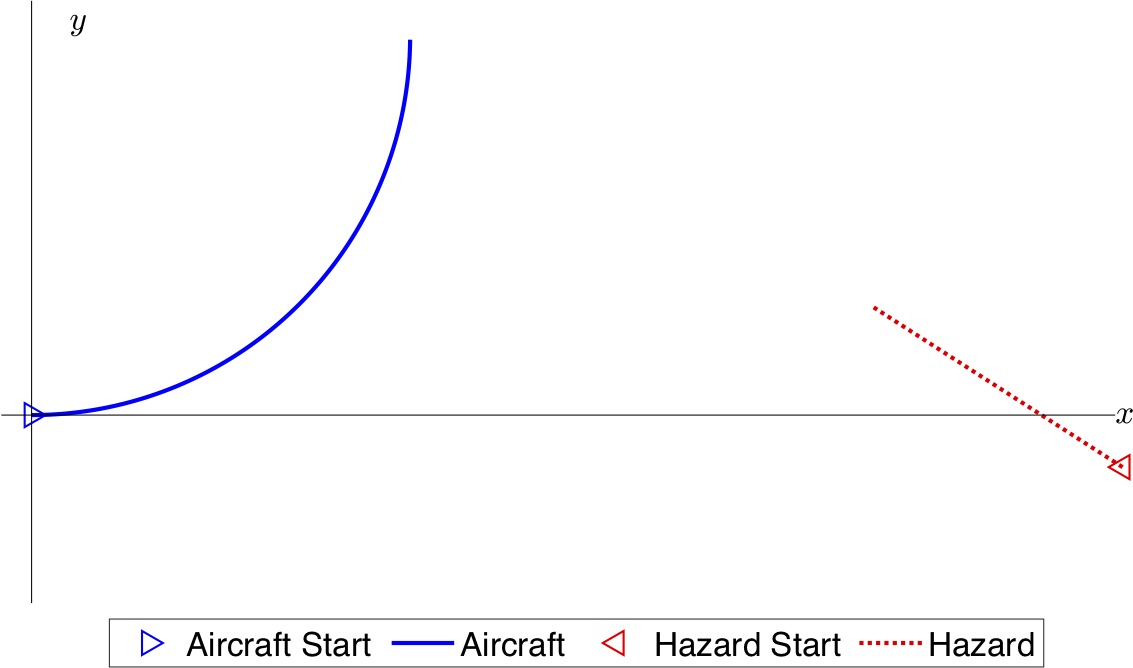}
        \caption{}
    \end{subfigure}
    \hfil
    \begin{subfigure}[b]{1.2in}
        \includegraphics[width=\textwidth]{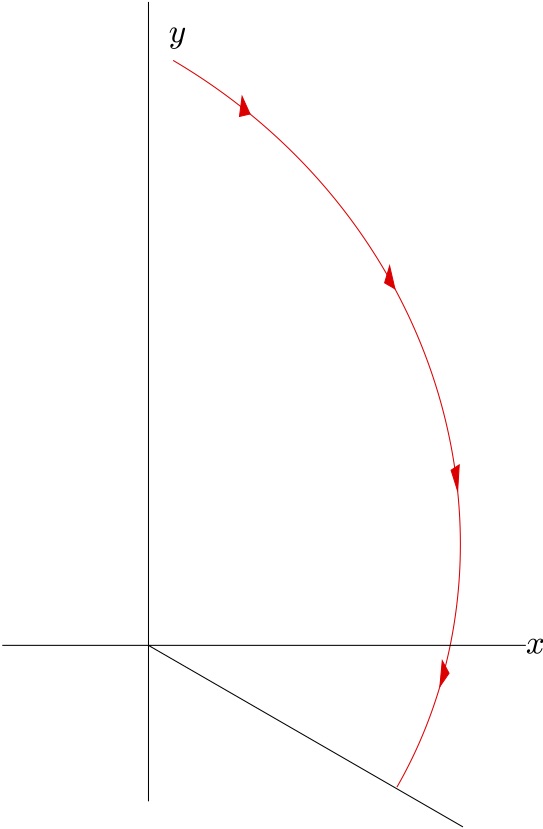}
        \caption{}
    \end{subfigure}
    \caption{Optimal trajectories for $v_h = 0.5$ on plane defined by $\bm{V}_a$ and $\bm{R}$ of (a) Aircraft and Hazard in fixed inertial coordinate system, and (b) Hazard in the relative coordinate system of Fig.~\ref{fig:fig1}. Trajectories terminate at the minimum range $r(T)$.}
    \label{fig:inertialTrajectories}
\end{figure}

\section{Further Properties of the Game Solution and the Bearing-Only Avoidance Strategy}
\label{sec:properties}

In this section, we examine further properties of the solution to the collision avoidance differential game \eqref{eq:problem}.
We first examine the sets of states that will lead to collision when both the aircraft and hazard play optimally, and we compare our results with the pursuit-evasion game solution of \cite{Exarchos2015}.
We then show that the optimal aircraft strategy \eqref{eq:aircraftRule} can be recast as a pure bearing-only strategy (without requiring knowledge of $v_h$), and can be extended to solving the problem of regaining a desired separation after maximising the miss-distance.
We further examine the optimality and robustness of this bearing-only strategy in cases where the hazard motion is neglected (i.e., when the hazard is  stationary) and in cases where the hazard, like the aircraft, has a finite turn rate.

\subsection{States Leading to Collision and Relation to Pursuit-Evasion Games}

Although we have found a strategy that enables the aircraft to maximise the minimum range $r(T)$, we have not quantified this range nor discussed its relationship to the initial state of the game.
Quantifying the miss-distance $r(T)$ is of significance since collisions or separation violations are often defined as occurring when $r(T)$ is below some threshold $\rho > 0$ (e.g., \cite{Exarchos2016}).
By recalling our discussion of the lines of minimum range defined by \eqref{eq:lineOfMinimumRange}, we note that $r(T) = 0$ from any initial state when the aircraft is slower than the hazard.
When the aircraft is faster than the hazard however, inspection of the optimal trajectories in Fig.~\ref{fig:fig3} suggests that the barrier separating initial states that lead to $r(T) < \rho$ from states with $r(T) > \rho$ is found by solving the state equations \eqref{eq:stateSolutions}, hazard controls \eqref{eq:optimalHazardControlViaAdjoint}, and value function gradients \eqref{eq:adjointSolutions} from the terminal states $(r(T), \theta(T)) = (\rho, \pm \acos(-v_h))$ on the lines of minimum range with $u_a^* = \pm 1$.
This barrier is illustrated in Fig.~\ref{fig:fig4}.

\begin{figure}[t!]
    \centering
    \includegraphics[width = 3in]{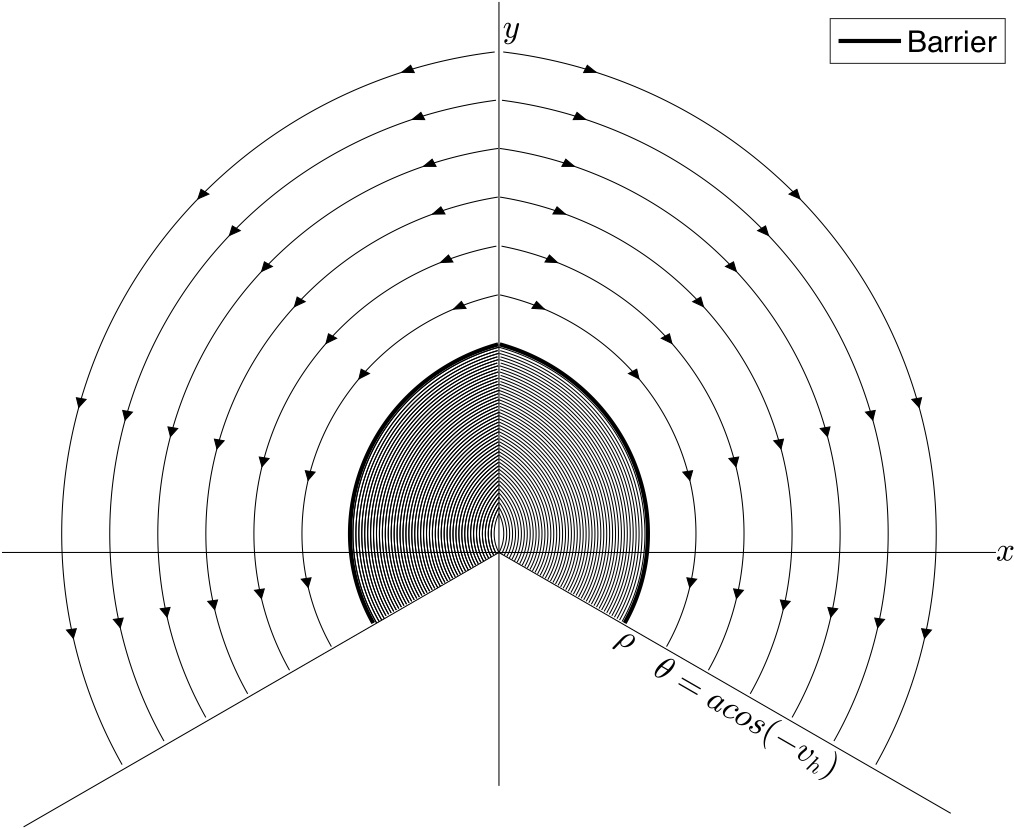}
    \caption{Optimal trajectories of the hazard in the coordinate system of Fig.~\ref{fig:fig1} with the states from which the miss-distance $r(T)$ is less than some distance $\rho > 0$ shaded. The barrier divides the states leading to $r(T) < \rho$ from those those leading to $r(T) > \rho$.}
    \label{fig:fig4}
\end{figure}

The barrier separating the states leading to $r(T) < \rho$ from those leading to $r(T) > \rho$ provides a clear link between our planar collision avoidance game \eqref{eq:problem} and the two-dimensional pursuit-evasion game considered in \cite{Exarchos2016} (see also \cite{Exarchos2014,Exarchos2015}).
The game considered in \cite{Exarchos2016} is posed and solved with the same equations of motion as our planar game \eqref{eq:problem} but is treated as a \emph{game of kind} in which the aircraft seeks to ensure that $r(t) > \rho$ for all $t \geq 0$ whilst the hazard seeks to achieve $r(t) < \rho$ for any $t \geq 0$, for a given capture radius $\rho > 0$.
The analysis of \cite{Exarchos2016} (and \cite{Exarchos2014,Exarchos2015}) is therefore primarily concerned with determining the barrier separating the states leading to the aircraft being ``captured'' from the states leading to the aircraft evading capture indefinitely.
The barrier we have identified for the collision avoidance game \eqref{eq:problem} separating states from which $r(T) < \rho$ from those leading to $r(T) > \rho$ (illustrated in Fig.~\ref{fig:fig4}) is identical to the capture/escape barriers identified in \cite{Exarchos2015,Exarchos2014,Exarchos2016} (see for example, \cite[Fig.~4]{Exarchos2015}).

In addition to offering a new alternative derivation of the results of \cite{Exarchos2015,Exarchos2014,Exarchos2016}, our results provide new insights important for collision avoidance.
In contrast to \cite{Exarchos2015,Exarchos2014,Exarchos2016}, our planar game \eqref{eq:problem} is a \emph{game of degree} in which the miss-distance is the payoff.
We have therefore identified the lines of minimum range \eqref{eq:lineOfMinimumRange} that appear as termination conditions in the optimal aircraft strategy \eqref{eq:aircraftRule} and describe the miss-distance $r(T)$ from different initial states. 
The results and strategies established in \cite{Exarchos2015,Exarchos2014,Exarchos2016} are in contrast terminated by conditions for capture, which can be met before the minimum range $r(T)$ is reached, leaving the optimal collision avoidance manoeuvres after breaching the (arbitrary) radius $\rho$ undefined.
Our results also explicitly relate to three dimensions whilst those of \cite{Exarchos2015,Exarchos2014,Exarchos2016} are established only in two dimensions, and as we shall now show, the optimal strategy we have identified for the aircraft \eqref{eq:aircraftRule} can be viewed as an optimal bearing-only collision avoidance strategy.

\subsection{Optimal Bearing-Only Strategy and Extension for Regaining Separation}

From \eqref{eq:aircraftRule}, we see that the aircraft's optimal strategy from any region of the game space is to turn away from the hazard in the plane defined by $\bm{V}_a$ and $\bm{R}$ until the relative bearing $\theta$ satisfies $\cos \theta = - v_h$.
Implementation of the aircraft's optimal strategy is therefore dependent on the availability of the (instantaneous) bearing angle $\theta(t)$, and the hazard's speed $v_h$ (since the plane defined by $\bm{V}_a$ and $\bm{R}$ can be equivalently defined with $\bm{V}_a$ and $\theta$).
The hazard's speed $v_h$ is however only used to determine when the game terminates, and the aircraft can maximise the miss-distance across all $0 < v_h \leq 1$ by simply continuing to turn until $\theta = \pm \pi$.
Without the hazard's speed, the aircraft's strategy solving the collision avoidance game for all $0 < v_h \leq 1$ can be written as the bearing-only strategy
\begin{align}
    \label{eq:preMainResult}
    u_a(t)
    = \begin{cases}
            -1  & \text{for } \theta(t) \in (0,\pi)\\
            1 & \text{for } \theta(t) \in (-\pi,0)\\
            \pm 1 & \text{for } \theta(t) = 0.
        \end{cases}
\end{align}

Collision avoidance typically also requires a minimum separation distance to be reached before an encounter is resolved.
It is therefore desirable to extend the bearing-only strategy of \eqref{eq:preMainResult} to ensure that the range $r(t)$ is increasing for times $t \in (T,\infty)$ that are not explicitly considered in the formulation of the game $\eqref{eq:problem}$ since they occur after the minimum range $r(T)$.
Extension of the bearing-only strategy of \eqref{eq:preMainResult} to times $t \in (T,\infty)$ so as to increase the range $r(t)$ is straightforward since once $\theta = \pm \pi$, the aircraft and hazard velocity vectors are aligned and the aircraft can increase the range $r(t)$ by controlling the instantaneous range rate $\dot{r}(t)$.
Intuitively, the instantaneous range rate $\dot{r}(t)$ is maximised by selecting $u_a(t) = 0$ and its maximisation ensures that the range $r(t)$ increases if $0 < v_h < 1$, remains constant if $v_h = 1$, or decreases at its slowest possible rate if $v_h > v_r = 1$.
The extended bearing-only strategy for collision avoidance and regaining separation is thus
\begin{align}
    \label{eq:mainResult}
    u_a(t)
    = \begin{cases}
            -1  & \text{for } \theta(t) \in (0,\pi)\\
            1 & \text{for } \theta(t) \in (-\pi,0)\\
            \pm 1 & \text{for } \theta(t) = 0\\
            0 & \text{for } \theta(t) = \pm \pi.
        \end{cases}
\end{align}
For the game \eqref{eq:problem} with an agile hazard, this strategy will maximise the minimum range $r(T)$ when $v_h \leq v_a$, and maximise the time it takes before $r(T) = 0$ when $v_h > v_a$.

Whilst we have abstracted the aircraft controls as $u_a = 0$ or $u_a = \pm 1$ in the presentation of the the bearing-only strategy \eqref{eq:mainResult} (consistent with previous works on optimal collision avoidance, cf.~\cite{Tarnopolskaya2009,Tarnopolskaya2010,Exarchos2016,Maurer2012,Miloh1976,Merz1972,Merz1973,Mylvaganam2017}), in practical implementations these abstract commands will need to be mapped from the plane defined by $\bm{V}_a$ and $\bm{R}$ to vertical and horizontal aircraft manoeuvres and lower-level control commands.
A variety of mappings from the abstract controls in \eqref{eq:mainResult} to vertical and horizontal aircraft manoeuvres appear possible and will be aircraft-specific due to differing aircraft performance characteristics.
The simplest mapping we anticipate as feasible for most aircraft is to simply perform maximum rate turns and/or climbs or dives away from the hazard (and to fly straight and level away from it when it is directly behind).
More complicated mappings may be possible in order to accommodate aircraft with unequal climb and dive rates however any differences are likely to only have measurable impacts on time scales beyond those of concern in close proximity collision encounters (see for example the discussion in \cite{Maurer2012} and references therein).

We shall next explore the properties of the bearing-only strategy \eqref{eq:mainResult} for stationary hazards and hazards with finite turn rates.

\subsection{Further Properties of Bearing-Only Strategy for Stationary Hazards and Hazards with Finite Turn Rates}

To explore the properties of the bearing-only strategy \eqref{eq:mainResult} for stationary hazards and hazards with finite turn rates, let us consider the two-dimensional plane defined by $\bm{V}_a$ and $\bm{R}$ together with a (fixed inertial) planar coordinate system with its origin fixed at some arbitrary point in the plane.
The equations describing the motion of the aircraft in this fixed coordinate frame are similar to those in Isaac's Game of Two Cars or Homicidal Chauffeur games (cf.~\cite{Basar1999}), namely, 
\begin{align}
    \label{eq:aircraftDynamicsInertial}
    \begin{split}
        \dot{x}_a(t)
        &= v_a \sin \psi_a(t)\\
        \dot{y}_a(t)
        &= v_a \cos \psi_a(t)\\
        \dot{\psi}_a(t)
        &= \omega_a u_a(t)
    \end{split}
\end{align}
where $(x_a,y_a) \in \mathbb{R}^2$ is the aircraft's position, $\psi_a(t) \in (-\pi,\pi]$ is the aircraft's heading angle, and $u_a \in [-1,1]$ is the aircraft's control input as in \eqref{eq:problemPolar} and \eqref{eq:problem}.
Here, again without loss of generality, we normalise to a speed of $v_a = 1$ and a maximum turn rate of $\omega_a = 1$.
We first examine the performance bearing-only strategy \eqref{eq:mainResult} against stationary hazards.

\begin{proposition}
\label{proposition:rangeAccelerationStationary}
Consider the planar setting described in Fig.~\ref{fig:fig1} with a stationary hazard (i.e., $v_h = 0$).
Then, the bearing-only strategy \eqref{eq:mainResult} maximises the instantaneous range acceleration $\ddot{r}(t)$ for any $t \geq 0$.
\end{proposition}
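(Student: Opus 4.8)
The plan is to reduce the maximisation of $\ddot{r}(t)$ to an elementary pointwise optimisation over $u_a(t) \in [-1,1]$ and then match the resulting maximiser against the bearing-only strategy \eqref{eq:mainResult} case by case. Working directly in the planar setting of Fig.~\ref{fig:fig1}, I would first set $v_h = 0$ in the planar range dynamics in \eqref{eq:planarEquationsOfMotion}, which collapses to $\dot{r} = -\cos\theta$. The key structural observation here is that for a stationary hazard the range rate depends on the bearing alone and not (directly) on the aircraft control, so the aircraft can influence $\ddot{r}$ only through the way it steers $\theta$ via \eqref{eq:planarEquationsOfMotion}.

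Next I would differentiate $\dot{r} = -\cos\theta$ with respect to time and substitute the planar bearing dynamics $\dot{\theta} = -u_a + r^{-1}\sin\theta$ (again with $v_h = 0$), which gives
\begin{align*}
    \ddot{r}(t)
    &= \sin\theta(t)\,\dot{\theta}(t)
    = -u_a(t)\sin\theta(t) + \dfrac{\sin^2 \theta(t)}{r(t)}.
\end{align*}
This expression is affine in $u_a(t)$ with slope $-\sin\theta(t)$, so at each instant $t$ the admissible control maximising $\ddot{r}(t)$ (given the current state $(r(t),\theta(t))$) is $u_a(t) = -\sign(\sin\theta(t))$ whenever $\sin\theta(t) \neq 0$, while the value of $\ddot{r}(t)$ is independent of $u_a(t)$ whenever $\sin\theta(t) = 0$.

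The last step is to verify that this pointwise maximiser coincides with \eqref{eq:mainResult}. For $\theta(t) \in (-\pi,\pi)$ one has $\sign(\sin\theta(t)) = \sign(\theta(t))$, so $u_a(t) = -\sign(\sin\theta(t))$ reproduces the first three cases of \eqref{eq:mainResult}, with $u_a = \pm 1$ both optimal at $\theta(t) = 0$ since then $\sin\theta(t) = 0$; and for $\theta(t) = \pm\pi$ one has $\sin\theta(t) = 0$, so the prescription $u_a = 0$ in \eqref{eq:mainResult} is among the (all equivalent) maximisers of $\ddot{r}(t)$. Hence \eqref{eq:mainResult} maximises $\ddot{r}(t)$ for every $t \geq 0$, as claimed.

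I do not expect a genuine obstacle in this argument: it is a single time-differentiation followed by maximising a linear function of $u_a$ over an interval. The only points that need care are the interpretation of ``maximises the instantaneous range acceleration'' as a pointwise statement over admissible controls at the current state, and the handling of the degenerate bearings $\theta \in \{0,\pm\pi\}$ at which $\ddot{r}$ does not depend on $u_a$ --- there the non-uniqueness built into \eqref{eq:mainResult} (the $\pm 1$ at $\theta = 0$ and the $0$ at $\theta = \pm\pi$) is precisely what is needed for consistency with optimality.
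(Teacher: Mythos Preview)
Your proposal is correct and follows essentially the same approach as the paper: differentiate the range rate to obtain $\ddot{r}$ as an affine function of $u_a$ and maximise pointwise over $[-1,1]$. The only cosmetic difference is that the paper carries out the computation in a fixed inertial frame with origin at the stationary hazard, whereas you work directly in the aircraft-attached frame of \eqref{eq:planarEquationsOfMotion}; your choice is arguably tidier since it reuses equations already derived.
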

\begin{proof}
Consider the (fixed inertial) planar coordinate system in which the aircraft moves with kinematics \eqref{eq:aircraftDynamicsInertial}.
Without loss of generality, we shall let the origin of this planar coordinate system be the (fixed) location of the hazard.
The position of the aircraft in polar coordinates is then described by
\begin{align*}
    \dot{r}(t)
    &= \cos \left( \psi_a(t) - \theta_a(t) \right)\\
    \dot{\theta}_a(t)
    &= \dfrac{1}{r(t)} \sin \left( \psi_a(t) - \theta_a(t) \right)
\end{align*}
where $r$ is the range between the hazard and the aircraft, and $\theta_a$ is the relative bearing of the aircraft from the hazard.
The aircraft heading $\psi_a$ evolves according to
\begin{align*}
    \dot{\psi}_a(t)
    &= u_a(t).
\end{align*}

The time derivative of $\dot{r}$ is
\begin{align*}
    \ddot{r}(t) 
    &= \left( \dot{\theta}_a(t) - \dot{\psi}_a(t) \right) \sin \theta_a(t)
\end{align*}
and by recalling that $\dot{\psi}_a(t) = u_a(t)$ we have that
\begin{align*}
    \argmax_{u_a(t) \in [-1,1]} \ddot{r}(t) 
    &= - \argmax_{u_a(t) \in [-1,1]} u_a(t) \sin \theta_a(t).
\end{align*}
Thus, when $\theta_a(t) \in (0,\pi)$, the control maximising the instantaneous range acceleration $\ddot{r}(t)$ is $u_a = -1$ whilst the control maximising $\ddot{r}(t)$ when $\theta_a(t) \in (-\pi,0)$ is $u_a = 1$, the controls $u_a = \pm 1$ are equivalent when $\theta_a(t) = 0$.
The proof is complete.
\end{proof}

Proposition \ref{proposition:rangeAccelerationStationary} is intuitive given that the bearing-only strategy of \eqref{eq:mainResult} prescribes manoeuvres away from the hazard without regard for its velocity vector.
Furthermore, since range acceleration is integrable twice to range in the case of stationary hazards, Proposition \ref{proposition:rangeAccelerationStationary} intuitively suggests that the bearing-only strategy of \eqref{eq:mainResult} is optimal for maximising the minimum range of the aircraft from a stationary hazard (i.e., solving a problem analogous to \eqref{eq:problemPolar} or \eqref{eq:problem} with $v_h = 0$).

We now examine the properties of the bearing-only strategy \eqref{eq:mainResult} in the (more realistic) case where the hazard has a finite turn rate and is not capable of instantaneous heading changes.
This situation has dominated previous considerations of optimal (ship and aircraft) collision avoidance (cf.~\cite{Olsder1978,Miloh1976,Tarnopolskaya2009}).
In the following proposition, we show that the bearing-only strategy \eqref{eq:mainResult} maximises the instantaneous range acceleration $\ddot{r}(t)$ in this case. 

\begin{proposition}
\label{proposition:rangeAccelerationTwoCars}
Consider the planar setting described in Fig.~\ref{fig:fig1} with a hazard limited to motion described by
\begin{align}
    \label{eq:hazardTurnDynamics}
    \begin{aligned}
    \dot{x}_h(t)
    &= v_h \sin \theta_h(t)\\
    \dot{y}_h(t)
    &= v_h \cos \theta_h(t)\\
    \dot{\theta}_h(t)
    &= \omega_h \check{u}_h(t)
    \end{aligned}
\end{align}
where $\check{u}_h \in [-1,1]$ is the hazard's turn direction and $\omega_h$ is the hazard's (finite) maximum turn rate.
Then the bearing-only strategy \eqref{eq:mainResult} for the aircraft maximises the instantaneous range acceleration $\ddot{r}(t)$ for any $t \geq 0$.
\end{proposition}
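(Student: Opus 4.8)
The plan is to mirror the proof of Proposition~\ref{proposition:rangeAccelerationStationary}, the only genuinely new feature being that the hazard now carries its own rotational state. I would work in the rotating coordinate system of Fig.~\ref{fig:fig1} that is attached to the aircraft's instantaneous velocity vector, in which the hazard's position is $(r,\theta)$ and its velocity direction relative to that frame is $u_h$, so that $\dot r$ and $\dot\theta$ are given by the first and second lines of \eqref{eq:planarEquationsOfMotion} (with $v_a=1$). The key preliminary observation is that, because this frame itself rotates at the aircraft's turn rate $\dot\psi_a = u_a$, the hazard's relative heading $u_h = \theta_h - \psi_a$ evolves, using the third line of \eqref{eq:hazardTurnDynamics}, as
\[
    \dot u_h(t) = \omega_h \check{u}_h(t) - u_a(t),
\]
i.e. the inertial turn rate of the hazard minus the turn rate of the aircraft.

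Next I would differentiate $\dot r = -\cos\theta + v_h\cos(u_h-\theta)$ with respect to time to obtain
\[
    \ddot r = \dot\theta \sin\theta - v_h\bigl(\dot u_h - \dot\theta\bigr)\sin(u_h-\theta),
\]
and then substitute the expressions for $\dot\theta$ from \eqref{eq:planarEquationsOfMotion} and for $\dot u_h$ above. The crucial algebraic point is that the $u_a$ contribution to $\dot u_h$ and to $\dot\theta$ is the same (namely $-u_a$), so it cancels in the difference $\dot u_h - \dot\theta = \omega_h\check u_h - \tfrac1r[\sin\theta + v_h\sin(u_h-\theta)]$; consequently the only place where $u_a$ survives in $\ddot r$ is through the single isolated term $-u_a\sin\theta$. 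In particular the hazard's maximum turn rate $\omega_h$ and turn direction $\check u_h$ (and its speed $v_h$) enter $\ddot r$ only through terms that do not depend on $u_a$.

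It then follows exactly as in the proof of Proposition~\ref{proposition:rangeAccelerationStationary} that
\[
    \argmax_{u_a(t)\in[-1,1]} \ddot r(t) = -\argmax_{u_a(t)\in[-1,1]} u_a(t)\sin\theta(t),
\]
so that $u_a=-1$ maximises $\ddot r$ when $\theta\in(0,\pi)$, $u_a=+1$ maximises $\ddot r$ when $\theta\in(-\pi,0)$, the controls $u_a=\pm 1$ are equivalent when $\theta=0$, and when $\sin\theta=0$ (i.e. $\theta=\pm\pi$) every $u_a\in[-1,1]$ is maximising, in particular $u_a=0$. This is precisely the extended bearing-only strategy \eqref{eq:mainResult}, which establishes the proposition.

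I expect the only real obstacle to be bookkeeping: correctly deriving $\dot u_h = \omega_h\check u_h - u_a$ for the rotating frame and then carefully verifying the cancellation of the $u_a$-terms in $\dot u_h - \dot\theta$, so that $\ddot r$ is genuinely affine in $u_a$ with the clean coefficient $-\sin\theta$. An equivalent route, avoiding the rotating frame, is to compute $\ddot r$ in a fixed inertial frame from $r\ddot r = \lVert\dot{\bm R}\rVert^2 - \dot r^2 + \langle \bm R, \dot{\bm V}_h - \dot{\bm V}_a\rangle$, and to note that $\langle \bm R, \dot{\bm V}_a\rangle = u_a\, r\sin\theta$ while all remaining terms, including $\langle\bm R,\dot{\bm V}_h\rangle$, are independent of $u_a$; this exhibits the same structural fact and would give the cleanest write-up.
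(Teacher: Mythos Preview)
Your proposal is correct and follows essentially the same route as the paper: the paper also works in the rotating frame, writes $\dot\psi = -u_a + \omega_h\check u_h$ (your $u_h$ is its $\psi$), differentiates $\dot r$ to obtain $\ddot r = \dot\theta\sin\theta - v_h(\dot\psi-\dot\theta)\sin(\psi-\theta)$, and then substitutes to reach the identical $\argmax$ identity. Your explicit remark about the cancellation of the $-u_a$ contributions in $\dot u_h - \dot\theta$, and the alternative inertial-frame computation via $\langle\bm R,\dot{\bm V}_a\rangle = u_a\,r\sin\theta$, are nice clarifications but not departures from the paper's argument.
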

\begin{proof}
With the hazard equations of motion \eqref{eq:hazardTurnDynamics}, the equations of motion \eqref{eq:planarEquationsOfMotion} become
\begin{align*}
    \begin{aligned}
    \dot{r}(t)
    &= - \cos \theta(t) + v_h \cos \left( \psi(t) - \theta(t) \right)\\
    \dot{\theta}(t)
    &= -u_a(t) + \dfrac{1}{r(t)} \left[ \sin \theta(t) + v_h \sin \left( \psi(t) - \theta(t) \right) \right]\\
    \dot{\psi}(t)
    &= -u_a(t) + \omega_h \check{u}_h(t)
    \end{aligned}
\end{align*}
where $\psi$ is the angle between the hazard's velocity vector and the aircraft's velocity vector.
We have again also normalised the aircraft's speed and turn rate so that $\omega_a = 1$ and $v_a = 1$ (see \cite[Section 2]{Tarnopolskaya2009} for further details).
The time derivative of $\dot{r}$ is then
\begin{align*}
    \ddot{r}(t)
    &= \dot{\theta} \sin \theta - v_h \left( \dot{\psi} - \dot{\theta} \right) \sin \left( \psi - \theta \right)
\end{align*}
and substitution of $\dot{\theta}$ and $\dot{\psi}$ into this expression gives that
\begin{align*}
    \argmax_{u_a(t) \in [-1,1]} \ddot{r}(t) 
    &= - \argmax_{u_a(t) \in [-1,1]} u_a(t) \sin \theta (t).
\end{align*}
Thus, when $\theta(t) \in (0,\pi)$, the control maximising the instantaneous range acceleration $\ddot{r}(t)$ is $u_a = -1$ whilst the control maximising $\ddot{r}(t)$ when $\theta(t) \in (-\pi,0)$ is $u_a = 1$, the controls $u_a = \pm 1$ are equivalent when $\theta(t) = 0$.
The proof is complete.
\end{proof}

Proposition \ref{proposition:rangeAccelerationTwoCars} establishes that the bearing-only strategy of \eqref{eq:mainResult} is a greedy strategy for maximising the instantaneous range acceleration $\ddot{r}(t)$ from hazards with finite turn rates.
Despite this property, the bearing-only strategy in general does not constitute a solution to the differential game \eqref{eq:problem} when it is formulated with a hazard that has a finite turn rate.
Indeed, strategies for maximising the minimum range from a hazard with a finite turn rate, require knowledge of the hazard's speed, turn rate, range, bearing, and heading (cf.~\cite{Olsder1978,Miloh1976} and \cite[Section 8.6]{Basar1999}).

Although the bearing-only strategy \eqref{eq:mainResult} is suboptimal for maximising the miss-distance (minimum range) when the hazard's turn rate is finite, it exhibits maximin robustness by maximising the miss-distance that can occur when the hazard is agile (i.e., capable of instantaneous heading changes).
Specifically, since an agile hazard capable of instantaneous heading changes can generate all other types of hazard motion, applying the bearing-only strategy \eqref{eq:mainResult} when the hazard has a finite turn rate will lead to a miss-distance that is greater than the miss-distance attained by applying it when the hazard can change heading instantaneously.
The difference between the miss-distances achieved by the bearing-only strategy \eqref{eq:mainResult} and the optimal strategies of \cite{Olsder1978,Miloh1976} for a hazard with a finite turn rate can be interpreted as the cost of the bearing-only strategy's robustness to not knowing the hazard's range, heading, and turning capabilities.

\section{Simulation Results}
\label{sec:results}

In this section, the effectiveness of the bearing-only strategy \eqref{eq:mainResult} is examined through software-in-the-loop simulations of test cases drawn from draft minimum operating performance standards for detect and avoid systems (i.e., Appendix P of \cite{RTCA2016}).

\subsection{Simulation Environment and Test Case Summary}
Our software-in-the-loop simulation environment is similar to those used in other works that have conducted software-in-the-loop simulation of guidance strategies (cf.~\cite{Garcia2010,Bittar2014,Garden2017,Agha2017}).
We employ a single Matlab/Simulink model interfaced to two instances of the commercial off-the-shelf flight simulator X-Plane 10 (cf.~\cite{XPlane}) --- one X-Plane instance to simulate the dynamics of the aircraft and the other to simulate the dynamics of the hazard.
Due to the lack of a standardised unmanned aircraft model in X-Plane, we use the Cessna 172SP aircraft model as a surrogate model for the unmanned aircraft and the Cirrus Vision SF50 aircraft model as the hazard.

The bearing-only strategy \eqref{eq:mainResult} and all of the autopilot control logic for the aircraft and hazard is implemented in Simulink so that the X-Plane instances only receive the desired throttle and control surface deflection commands, simulate the aircraft and hazard dynamics, and then return the aircraft and hazard state variables (i.e., airspeed, attitude, altitude, and position) to Simulink.
The aircraft and hazard autopilot architectures are implemented to match those of existing unmanned aircraft autopilots (see for example SLUGS \cite{Lizarraga2013} or MicroPilot \cite{MicroPilot}) with proportional-integral-derivative (PID) control loops that determine the aileron deflection angle from roll (roll-to-ailerons), the elevator deflection angle from pitch (pitch-to-elevator), the throttle command from airspeed (airspeed-to-throttle), and the rudder deflection angle from lateral acceleration (lateral-acceleration-to-rudder).
We shall focus on simulations with the hazard and aircraft in the same horizontal plane in order to illustrate the performance of the bearing-only strategy \eqref{eq:mainResult} without introducing the complexity of examining the best combination of vertical and horizontal controls to ensure that the aircraft and hazard remain in the plane defined by $\bm{V}_a$ and $\bm{R}$.
The output of the bearing-only strategy \eqref{eq:mainResult} is therefore converted into a commanded roll with the commanded values being $0^\circ$ and $\pm \, 60^\circ$ for $u_a(t) = 0$ and $u_a(t) = \pm 1$, respectively (corresponding to straight and level flight or steep level turns with a load factor of two).

To examine the effectiveness of the bearing-only strategy \eqref{eq:mainResult}, we selected eight test cases from Appendix P of \cite{RTCA2016} that capture a variety of head-on (H), converging (C), and overtaking (O) situations with the aircraft and hazard in the same horizontal plane.
We modified the test cases slightly to ensure feasibility with our aircraft and hazard models by keeping the velocity ratio $v_h/v_a$ approximately consistent but varying the speed of the aircraft.
The adapted cases are summarised in Table \ref{tbl:rtcaCases}.
The initial conditions in each test case are such that collision will occur if neither the aircraft nor hazard vary their velocities, with the intersect angles in Table \ref{tbl:rtcaCases} describing the angles between the hazard and aircraft paths at the point of collision.

For each test case in Table \ref{tbl:rtcaCases}, we consider initial ranges $r(0)$ of 1000m, 1500m, and 2000m since state-of-the-art aircraft detection systems based on electro-optical, infrared, and acoustic sensors are capable of reliable aircraft detections up to ranges of 2400m (cf.~\cite{Yu2015,James2018,James2018a}).
Consistent with our theoretical results, we omit consideration of sensor field-of-view limitations in our simulations (but we note that a 360$^\circ$ field of view is not unreasonable for acoustic sensors and is increasingly feasible with electro-optical and infrared sensors in arrays or equipped with specialised lenses \cite{Huang2017}).
Finally, since we have already established the optimality of the bearing-only strategy \eqref{eq:mainResult} for agile hazards that seek to minimise the miss-distance (and since representative numerical illustrations of the optimal trajectories are provided in Figs.~\ref{fig:fig3} -- \ref{fig:fig4}), in our simulations we shall consider a hazard that is non-responsive (i.e., it selects and maintains its heading independently of the aircraft).

\begin{table*}[!t]
\caption{Test cases adapted from Appendix P of \cite{RTCA2016}.}
\label{tbl:rtcaCases}
\begin{tabular}{@{}cccccl@{}}
\toprule
\textbf{Type} & \multicolumn{1}{c}{\textbf{ID}} & \multicolumn{1}{c}{\textbf{$v_a$ (knots)}} & \multicolumn{1}{c}{\textbf{$v_h/v_a$}} & \multicolumn{1}{c}{\textbf{Intersect Angle (degrees)}} & \multicolumn{1}{c}{\textbf{Initial Conditions Description}} \\ \midrule
\multirow{2}{*}{\textbf{Head-on}}    & H1                              & 50                                                 & 3                                               & 180                                                    & High speed encounter.                                       \\
                                     & H2                              & 42                                                  & 3.75                                               & 180                                                    & Low speed encounter.                                        \\ \cmidrule(l){1-6} 
\multirow{4}{*}{\textbf{Converging}} & C1                              & 50                                                 & 1.33                                               & 5                                                      & Hazard on right.                           \\
                                     & C6                              & 60                                                 & 1                                               & -60                                                    & Hazard on left.                           \\
                                     & C11                             & 60                                                 & 0.66                                              & 120                                                    & Hazard on right.\\         
                                     & C16                             & 60                                                 &   0.75                                             & -175                                                   & Hazard on left.
                                     \\ \cmidrule(l){1-6} 
\multirow{2}{*}{\textbf{Overtaking}} & O1                              & 42                                                  & 3.80                                               & 0                                                      & Hazard overtaking.                                          \\
                                     & O2                              & 60                                                 & 0.66                                               & 0                                                      & Aircraft overtaking.                                        
                                     \\\bottomrule
\end{tabular}
\end{table*}

\subsection{Head-on Cases}

Fig.~\ref{fig:HeadonCases} shows the simulated trajectories of the aircraft and hazard, as well as the ranges between them, in the two head-on test cases from an initial range of 2000m.
In both Cases H1 and H2, the hazard starts directly in front of the aircraft, and the aircraft's initial manoeuvre given by the bearing-only strategy \eqref{eq:mainResult} is to turn to the left (or right since both turns provide equal miss-distance in these cases).
After the hazard passes behind the aircraft, the aircraft (following the bearing-only strategy \eqref{eq:mainResult}) turns to keep the hazard directly behind it.
The aircraft is significantly slower than the hazard in both Cases H1 and H2 and so if the hazard employed its optimal strategy for minimising the range, it could ensure that $r(t) = 0$ for some $t \geq 0$.
However, due to the hazard maintaining its course, the bearing-only strategy \eqref{eq:mainResult} enables to the aircraft to manoeuvre such that $r(t) > 300$m in both cases.

In the simulations of Case H1 and Case H2 starting from initial ranges of 1000m and 1500m, the aircraft's response varies only in that it turns right to keep the hazard behind it sooner since the passage of the hazard behind the aircraft occurs earlier.
The minimum ranges between the aircraft and hazard in the simulations of Case H1 and Case H2 from different initial ranges are shown in Fig.~\ref{fig:HeadonCasesRanges}.
The bearing-only strategy ensures an approximately linear relationship between the initial and minimum ranges in these cases.

\begin{figure*}[tb!]
    \centering
    \begin{subfigure}[b]{3.14in}
        \includegraphics[width=\textwidth]{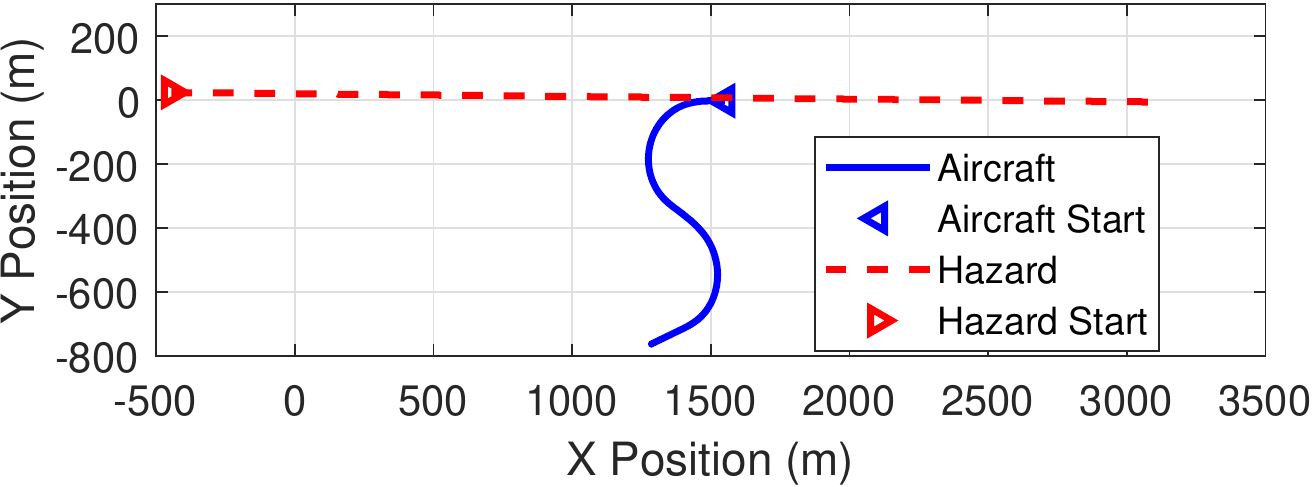}
        \caption{}
    \end{subfigure}
    \begin{subfigure}[b]{3.14in}
        \includegraphics[width=\textwidth]{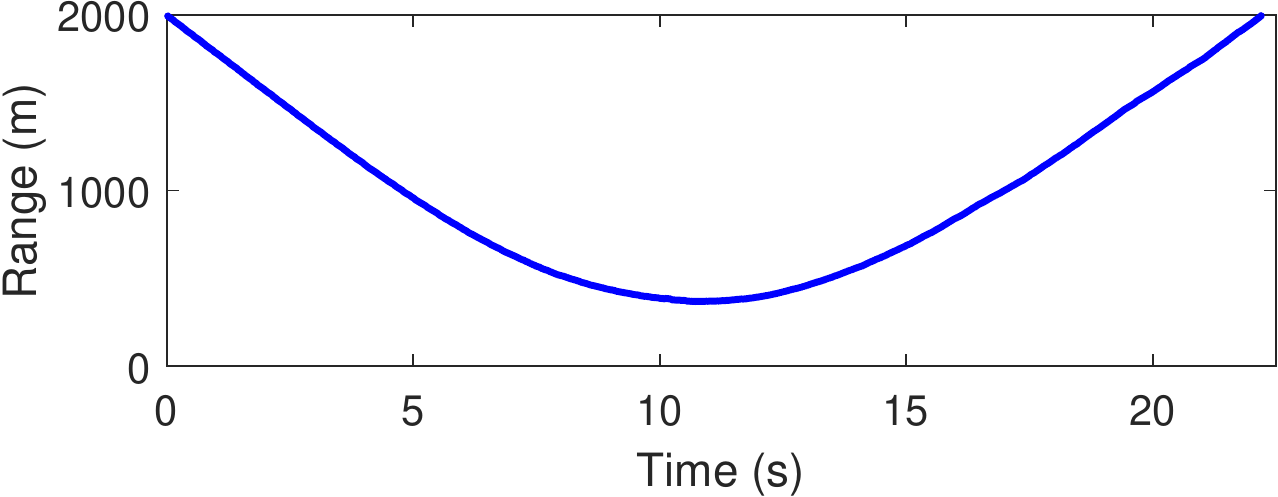}
        \caption{}
    \end{subfigure}\\
    \begin{subfigure}[b]{3.14in}
        \includegraphics[width=\textwidth]{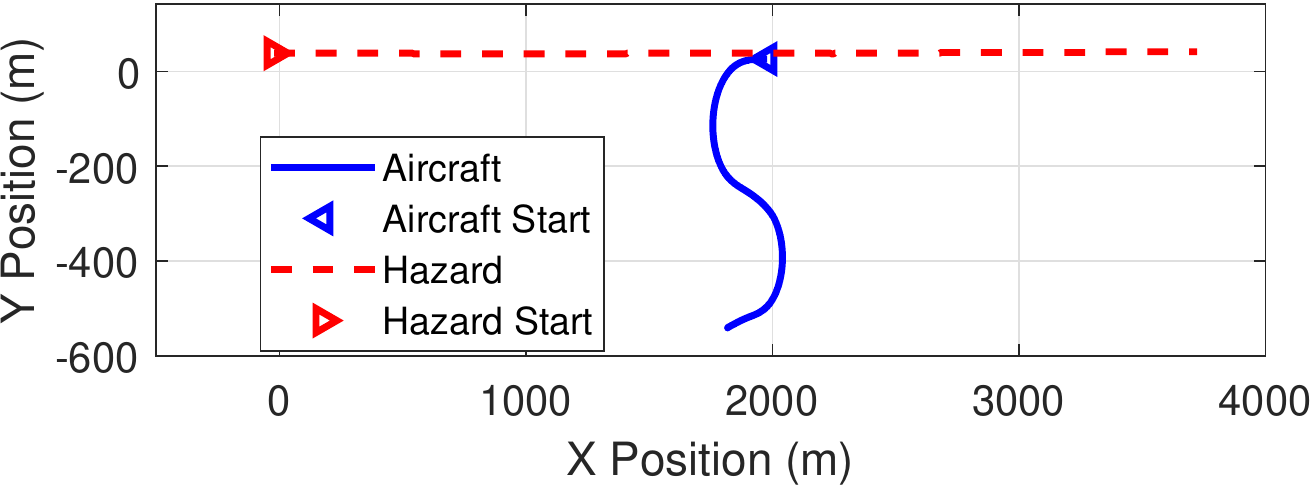}
        \caption{}
    \end{subfigure}
    \begin{subfigure}[b]{3.14in}
        \includegraphics[width=\textwidth]{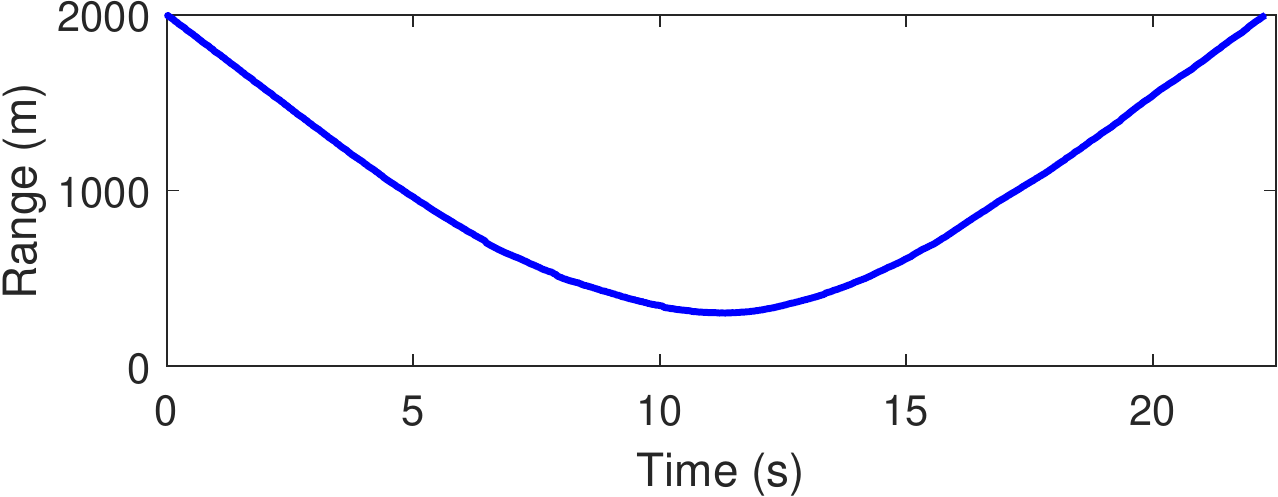}
        \caption{}
    \end{subfigure}
    \caption{Head-on cases from initial range 2000m: (a) Trajectories and (b) Range for H1, and (c) Trajectories and (d) Range for H2.}
    \label{fig:HeadonCases}
\end{figure*}

\begin{figure*}[tb!]
    \centering
    \begin{subfigure}[b]{3.14in}
        \includegraphics[width = \textwidth]{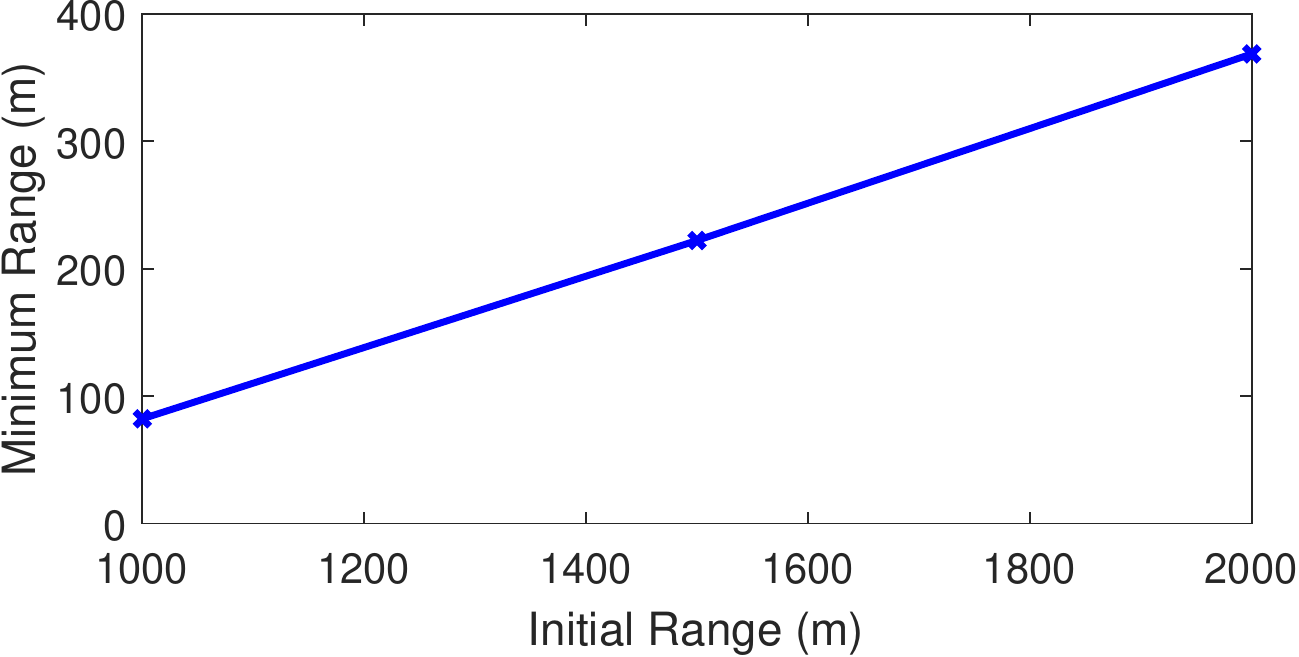}
        \caption{}
    \end{subfigure}
    \begin{subfigure}[b]{3.14in}
        \includegraphics[width=\textwidth]{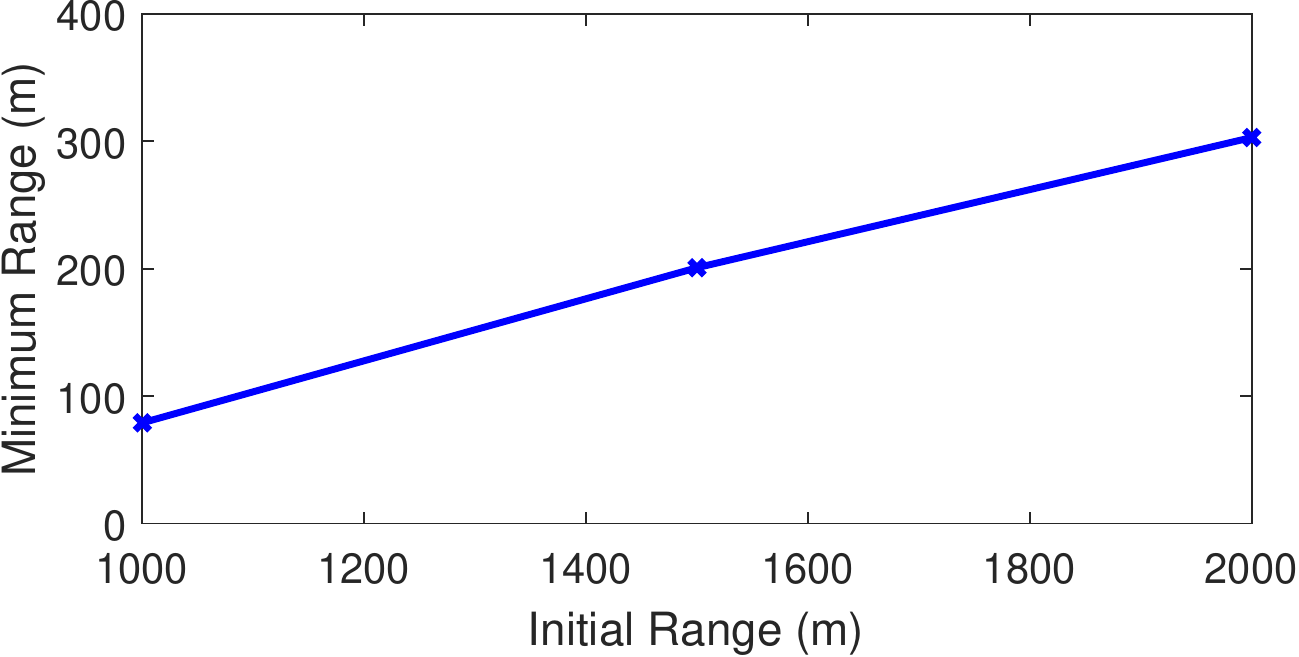}
        \caption{}
    \end{subfigure}\\
    \caption{Head-on Cases: Minimum Range $r(T)$ versus Initial Range $r(0)$ for (a) Case H1 and (b) Case H2.}
    \label{fig:HeadonCasesRanges}
\end{figure*}

\subsection{Converging Cases}

The simulated trajectories of the aircraft and hazard in the converging cases with initial ranges of 2000m are shown in Fig.~\ref{fig:ConvergingCases}.
From Fig.~\ref{fig:ConvergingCases}, we see that the bearing-only strategy \eqref{eq:mainResult} enables the aircraft to ensure that $r(t) > 500$m in all four cases (including those with faster hazards).
In all cases, an initial turn in the opposite direction to that prescribed by the bearing-only strategy \eqref{eq:mainResult} would lead to a smaller miss-distance provided the hazard does not also change its heading.
The bearing-only strategy \eqref{eq:mainResult} in these cases therefore manages to maximise the miss-distance, despite in general being suboptimal when the hazard's turn rate is finite (as we discussed after Proposition \ref{proposition:rangeAccelerationTwoCars}).
It is interesting to note that the more gradual second turns in Case C1 and Case C6 occur due to the hazard crossing behind the aircraft and the bearing-only strategy attempting to keep the hazard behind the aircraft by switching the commanded roll between $0^\circ$ and $\pm 60^\circ$.

As in the head-on cases, the trajectories of the aircraft and hazard in simulations of the converging cases from initial ranges of 1000m and 1500m have largely the same features as when the initial range is 2000m.
The minimum ranges from different initial ranges for these converging cases are shown in Fig.~\ref{fig:ConvergingCasesRanges}.
The bearing-only strategy again ensures an approximately linear relationship between the initial and minimum ranges in these cases.


\begin{figure*}[tb!]
    \centering
    \begin{subfigure}[b]{3.14in}
        \includegraphics[width=\textwidth]{fig_C1_20001.eps}
        \caption{}
    \end{subfigure}
    \begin{subfigure}[b]{3.14in}
        \includegraphics[width=\textwidth]{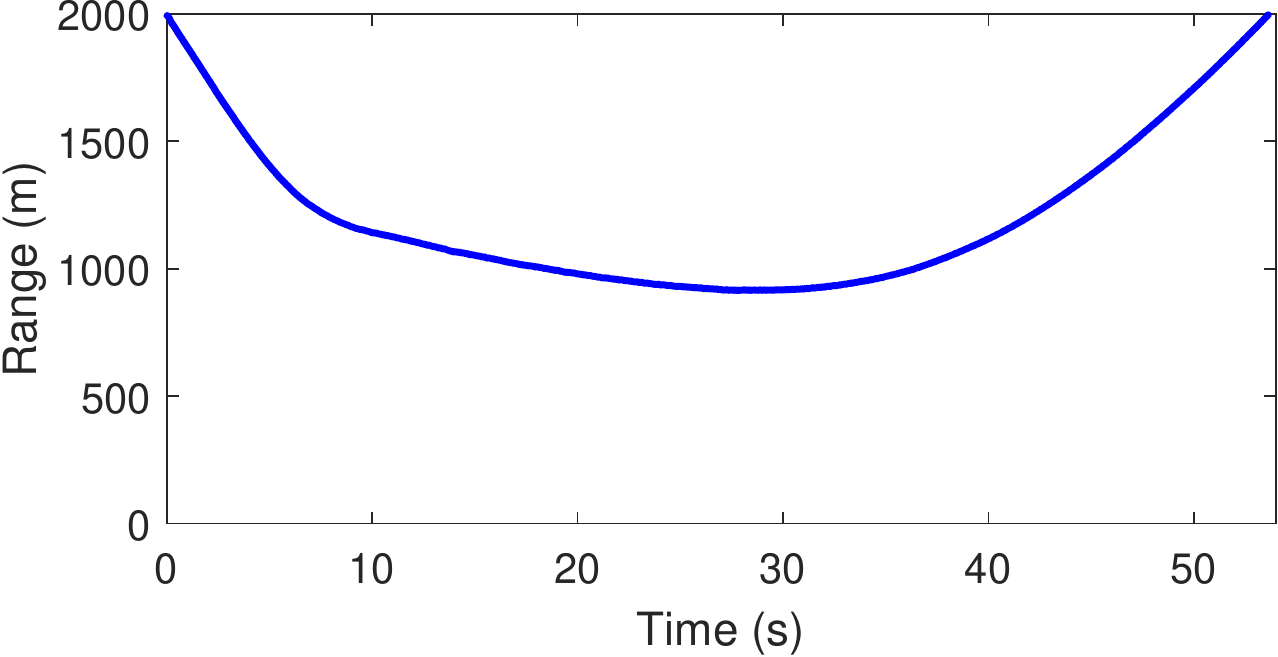}
        \caption{}
    \end{subfigure}\\
    \begin{subfigure}[b]{3.14in}
        \includegraphics[width=\textwidth]{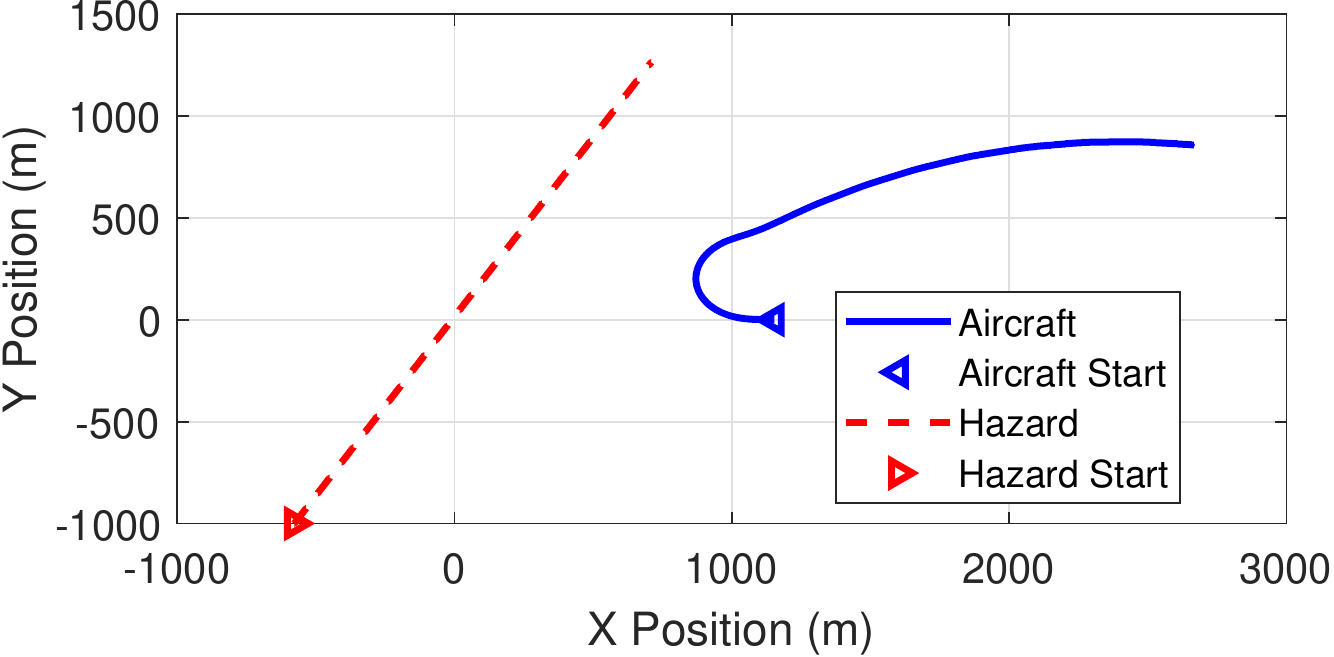}
        \caption{}
    \end{subfigure}
    \begin{subfigure}[b]{3.14in}
        \includegraphics[width=\textwidth]{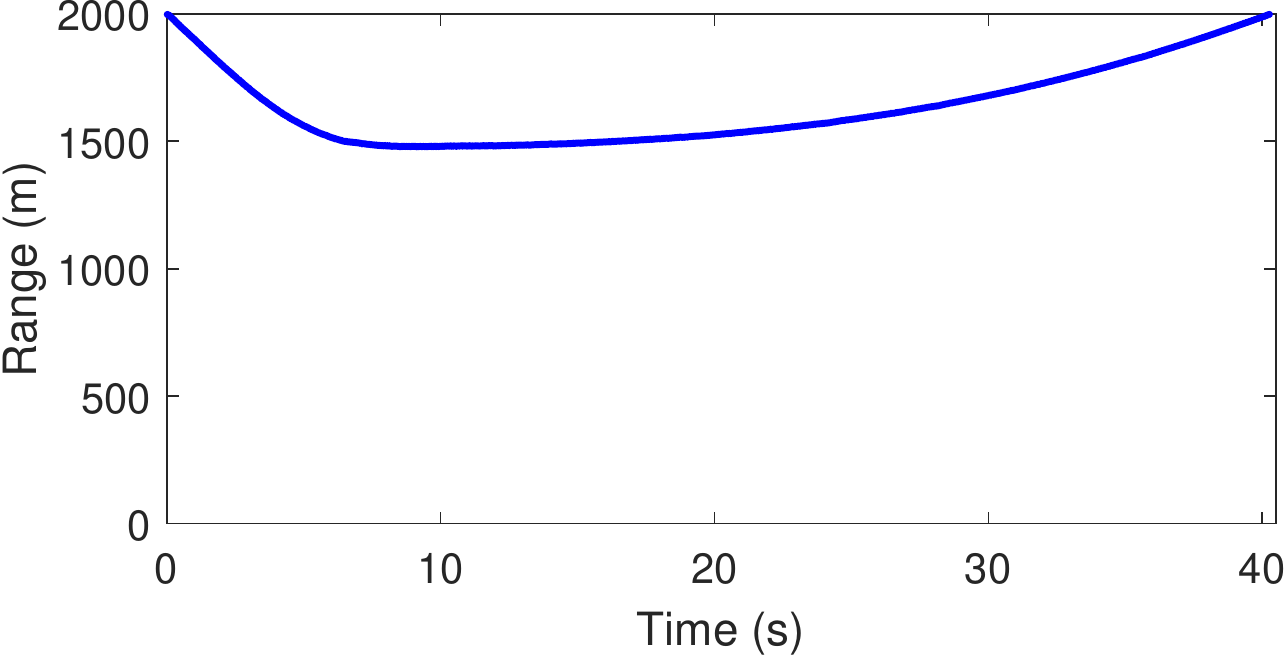}
        \caption{}
    \end{subfigure}\\
    \begin{subfigure}[b]{3.14in}
        \includegraphics[width=\textwidth]{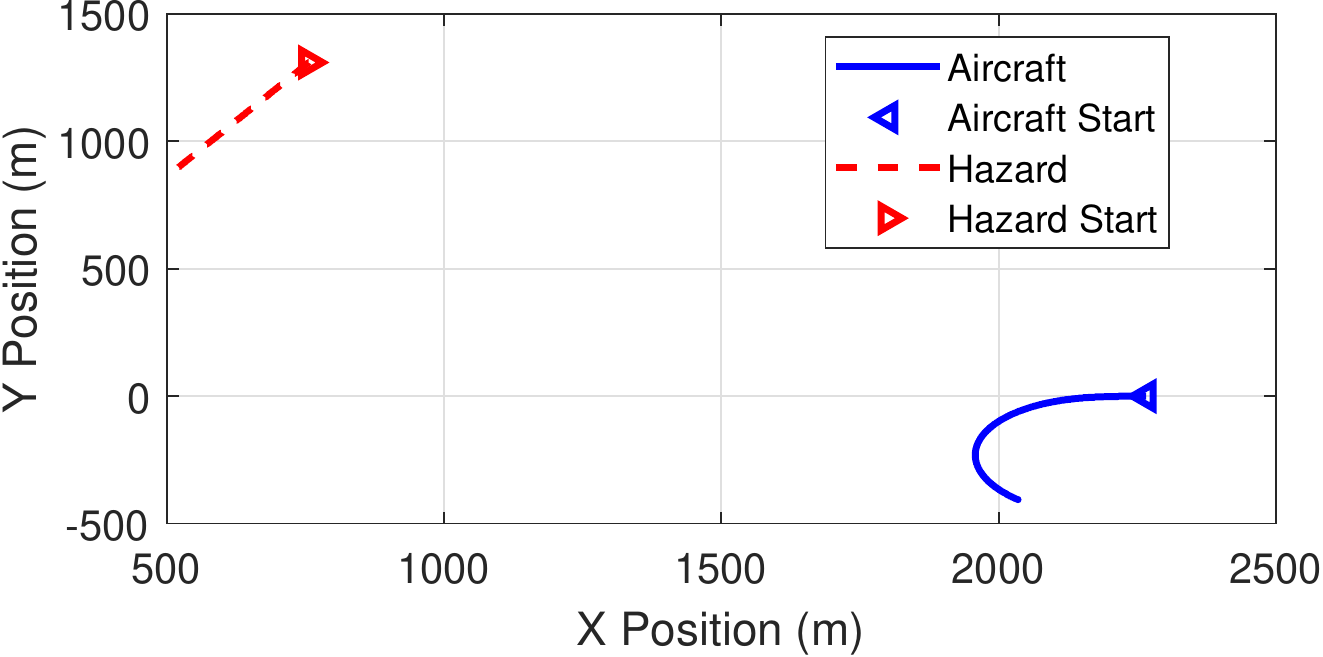}
        \caption{}
    \end{subfigure}
    \begin{subfigure}[b]{3.14in}
        \includegraphics[width=\textwidth]{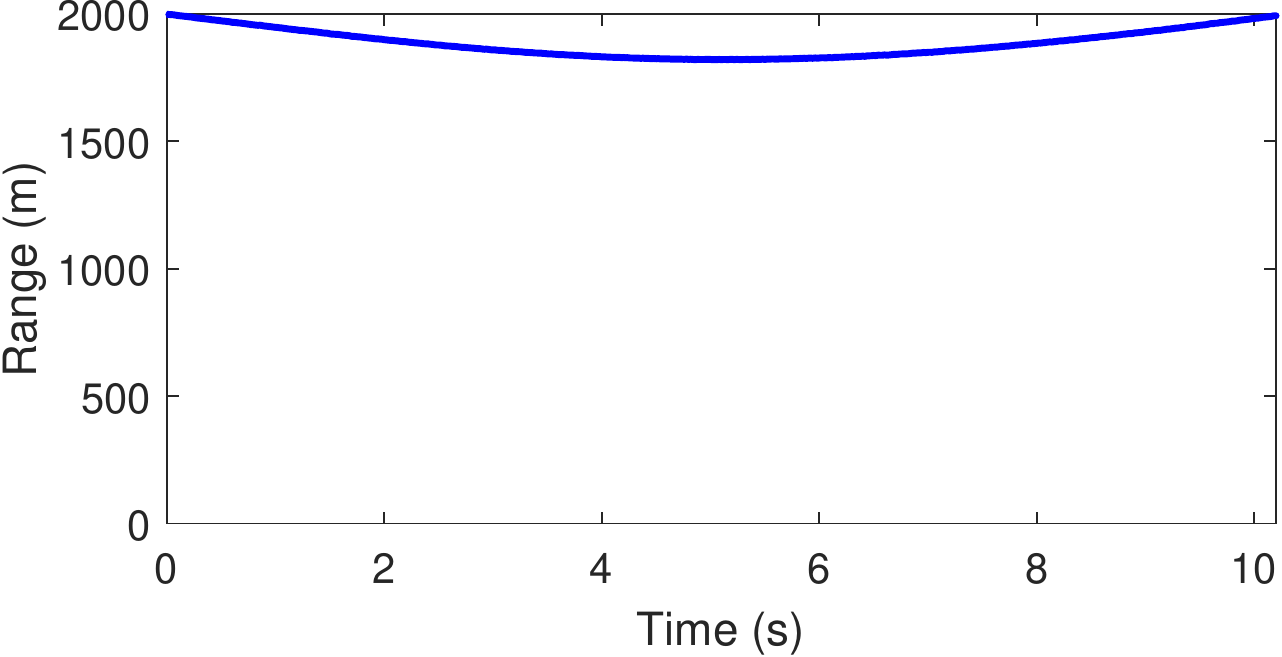}
        \caption{}
    \end{subfigure}\\
        \begin{subfigure}[b]{3.14in}
        \includegraphics[width=\textwidth]{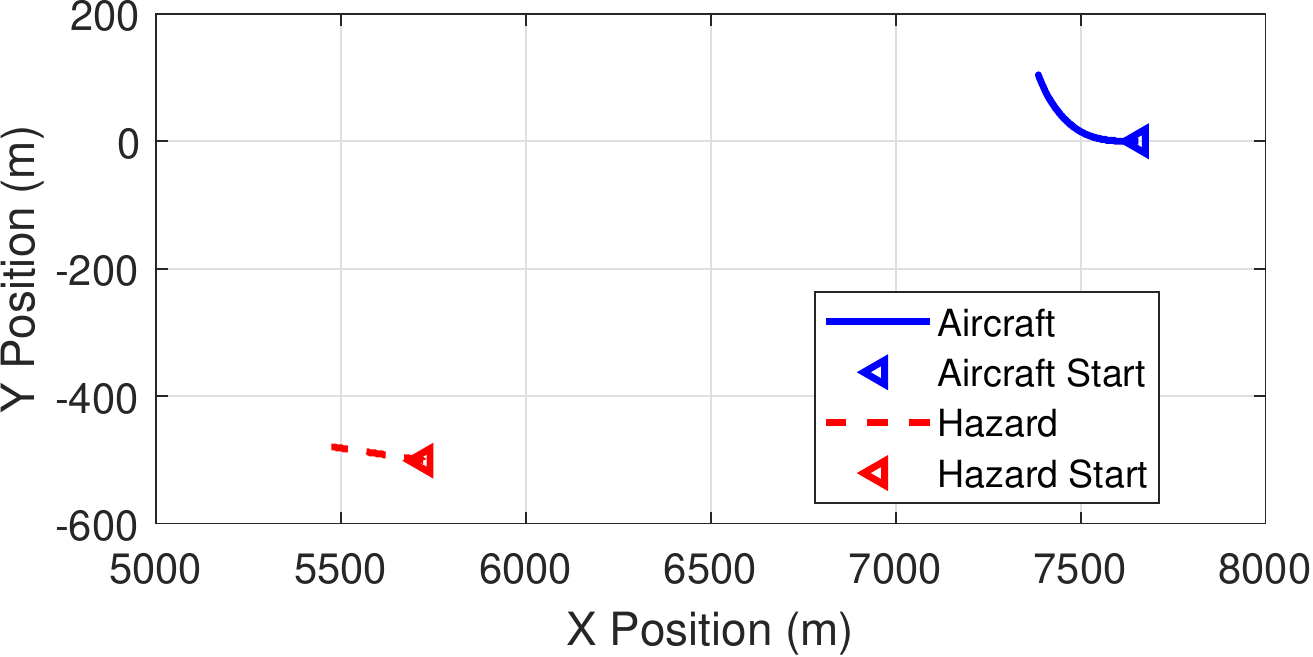}
        \caption{}
    \end{subfigure}
    \begin{subfigure}[b]{3.14in}
        \includegraphics[width=\textwidth]{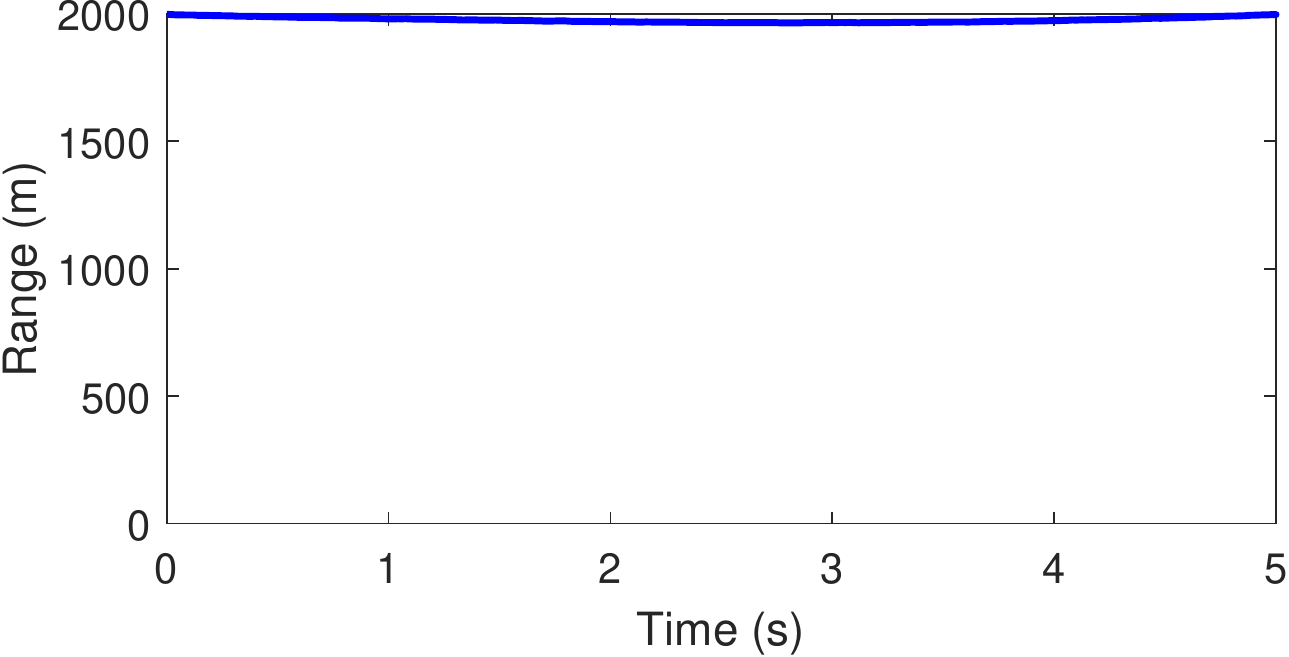}
        \caption{}
    \end{subfigure}
    \caption{Converging Cases from initial range 2000m: (a) Trajectories and (b) Range for C1, (c) Trajectories and (d) Range for C6, (e) Trajectories and (f) Range for C11, and (g) Trajectories and (h) Range for C16.}
    \label{fig:ConvergingCases}
\end{figure*}

\begin{figure*}[tb!]
    \centering
    \begin{subfigure}[b]{3.14in}
        \includegraphics[width = \textwidth]{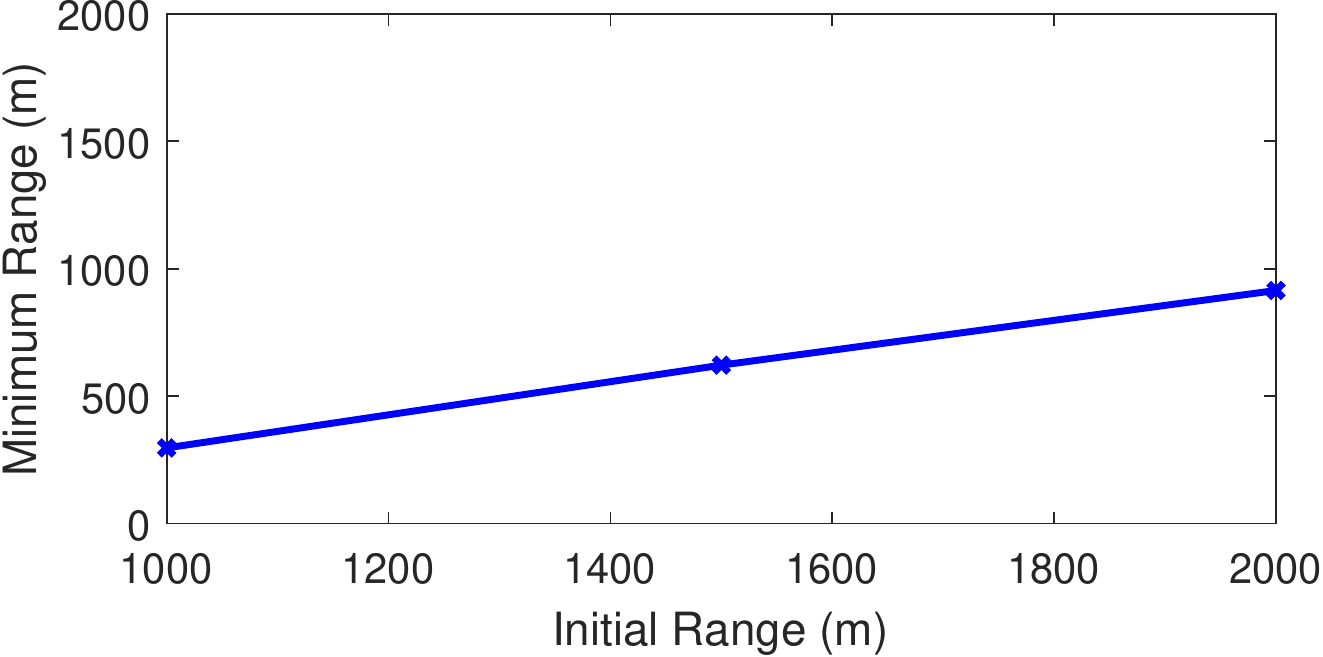}
        \caption{}
    \end{subfigure}
    \begin{subfigure}[b]{3.14in}
        \includegraphics[width=\textwidth]{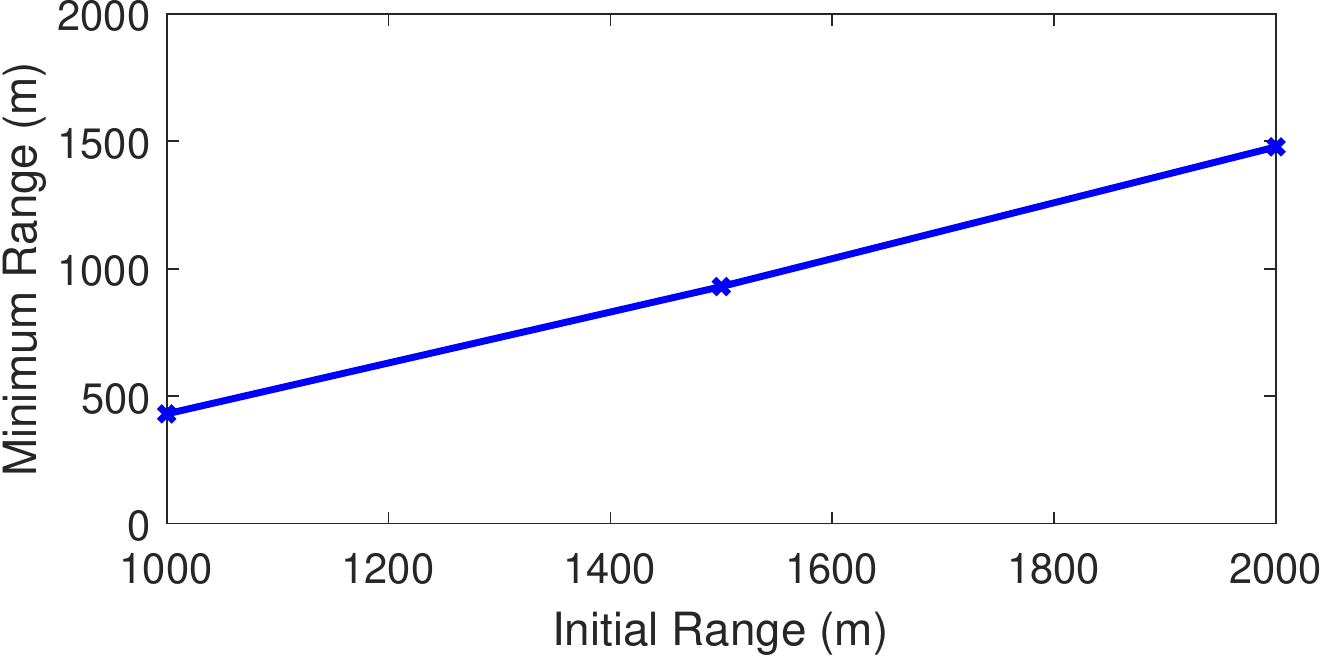}
        \caption{}
    \end{subfigure}\\
        \begin{subfigure}[b]{3.14in}
        \includegraphics[width = \textwidth]{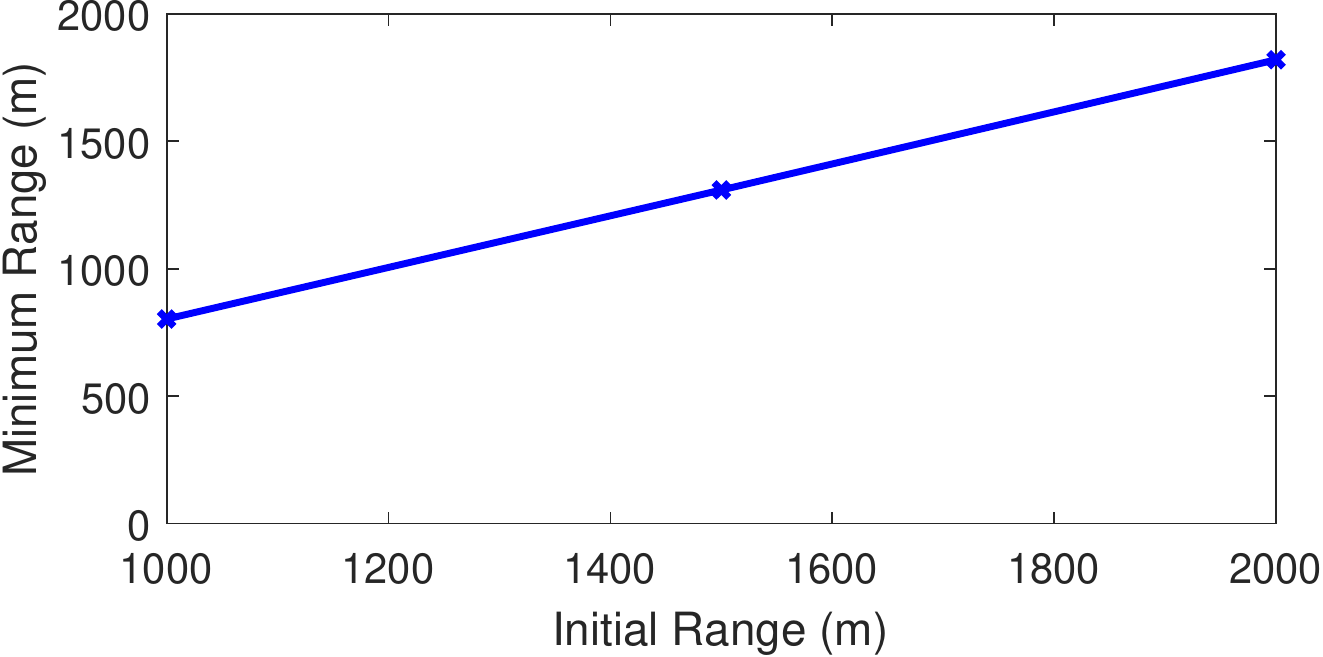}
        \caption{}
    \end{subfigure}
    \begin{subfigure}[b]{3.14in}
        \includegraphics[width=\textwidth]{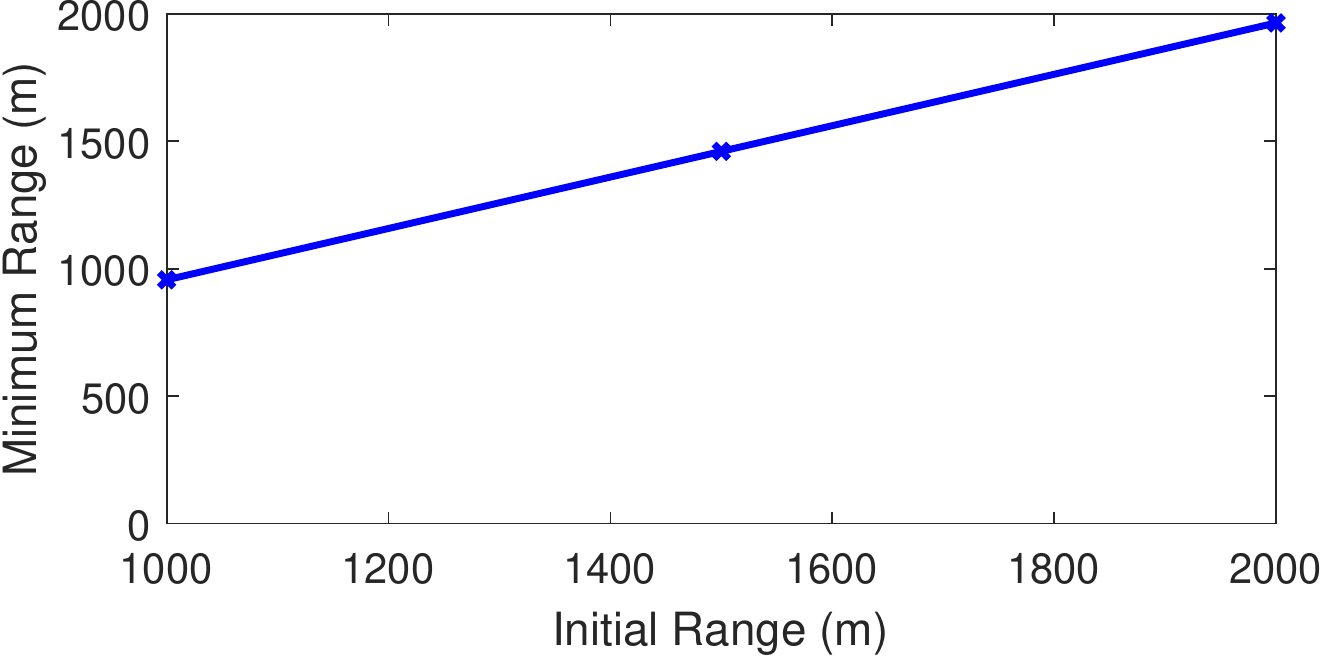}
        \caption{}
    \end{subfigure}
    \caption{Converging Cases: Minimum Range $r(T)$ versus Initial Range $r(0)$ for (a) Case C1, (b) Case C6, (c) Case C11, and (b) Case C16.}
    \label{fig:ConvergingCasesRanges}
\end{figure*}

\subsection{Overtaking Cases}

The simulated trajectories and ranges for the overtaking cases from an initial range of 2000m are presented in Fig.\ref{fig:OvertakingCases}.
In Case O2, the aircraft overtakes the slower hazard (and avoids colliding with it) by employing a left turn as prescribed by the bearing-only strategy \eqref{eq:mainResult}.
However, in Case O1, the aircraft and hazard collide since the aircraft maintains its course in accordance with the bearing-only strategy \eqref{eq:mainResult} and the hazard is faster than the aircraft.
The bearing-only strategy \eqref{eq:mainResult} has however offered protection against the hazard in Case O1 employing its optimal strategy for minimising the miss-distance since if the aircraft were to turn, and the hazard were to employ its optimal strategy, the range would decrease at a faster rate and collision would occur earlier.
By following the bearing-only strategy \eqref{eq:mainResult} in Case O1, the aircraft has therefore succeeded in ensuring that the range decreases at the slowest possible rate. 
We also note that by maintaining its course whilst being overtaken, the aircraft has obeyed the rules of the air (cf.~\cite{ICAO2005}).

The trajectories of the aircraft and hazard in simulations of the overtaking cases from initial ranges of 1000m and 1500m are essentially identical to those where the initial range is 2000m.
The minimum ranges from different initial ranges for these overtaking cases are shown in Fig.~\ref{fig:OvertakingCasesRanges}.
When the aircraft is being overtaken in Case O1, no initial range results in a nonzero minimum range.
However, when the aircraft is overtaking the hazard in Case O2, the bearing-only strategy ensures an approximately linear relationship between the initial and minimum ranges.

\begin{figure*}[tb!]
    \centering
    \begin{subfigure}[b]{3.14in}
        \includegraphics[width=\textwidth]{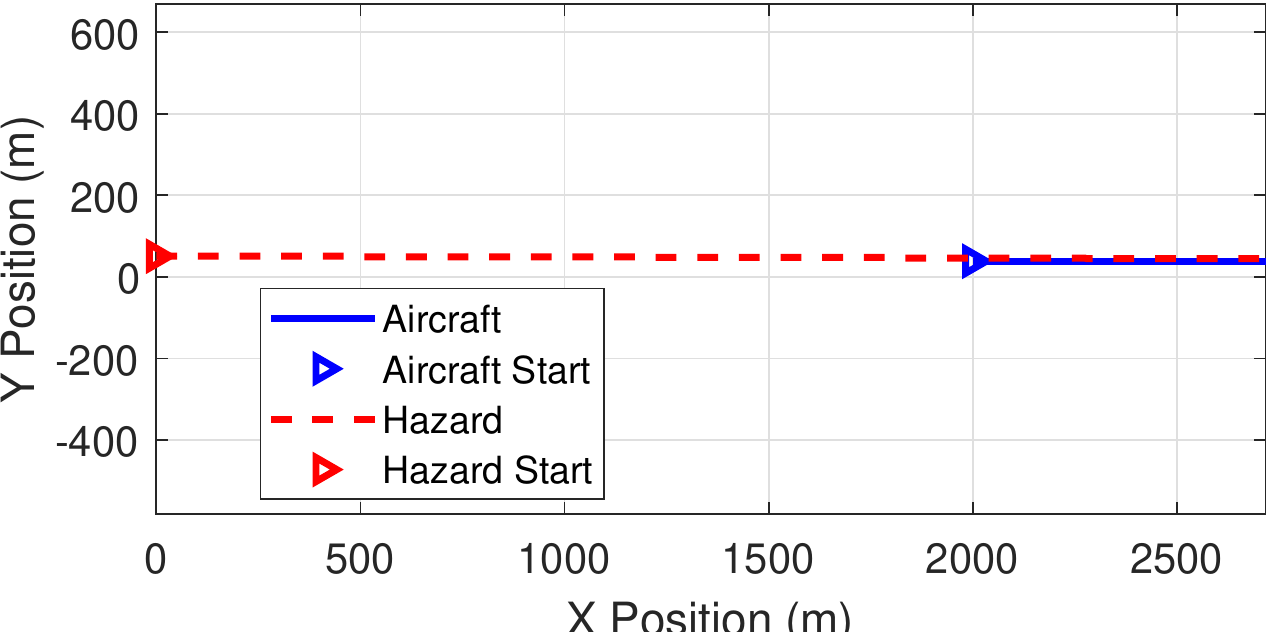}
        \caption{}
    \end{subfigure}
    \begin{subfigure}[b]{3.14in}
        \includegraphics[width=\textwidth]{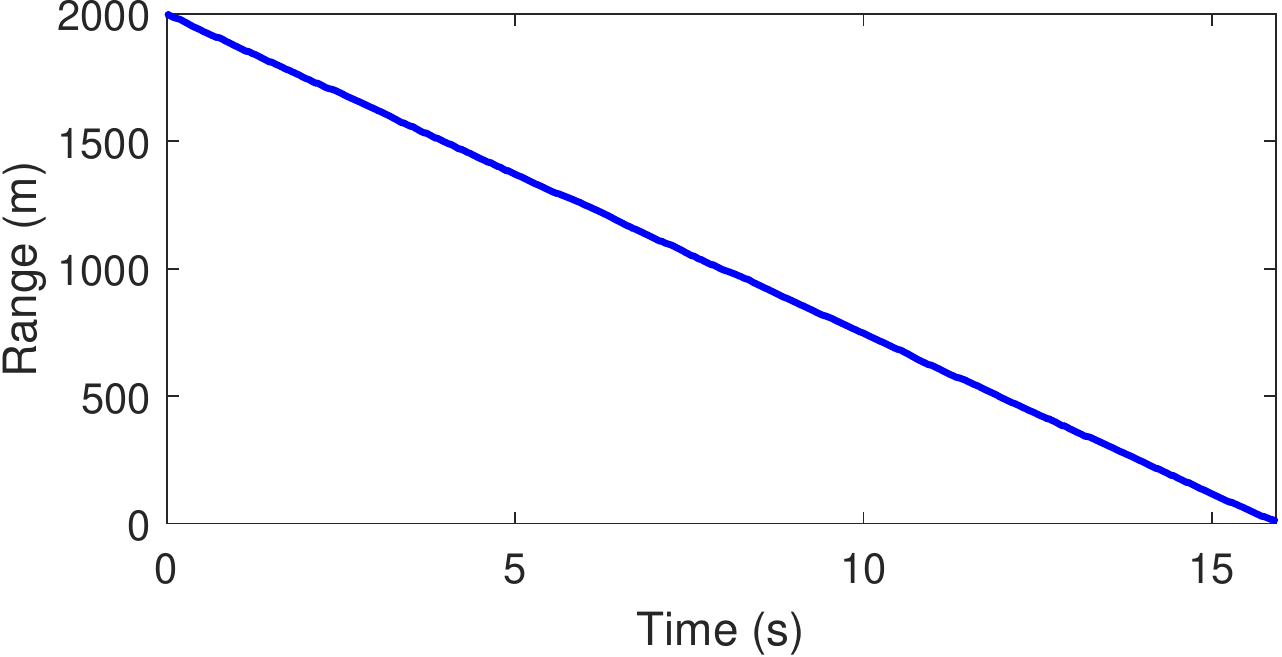}
        \caption{}
    \end{subfigure}\\
    \begin{subfigure}[b]{3.14in}
        \includegraphics[width=\textwidth]{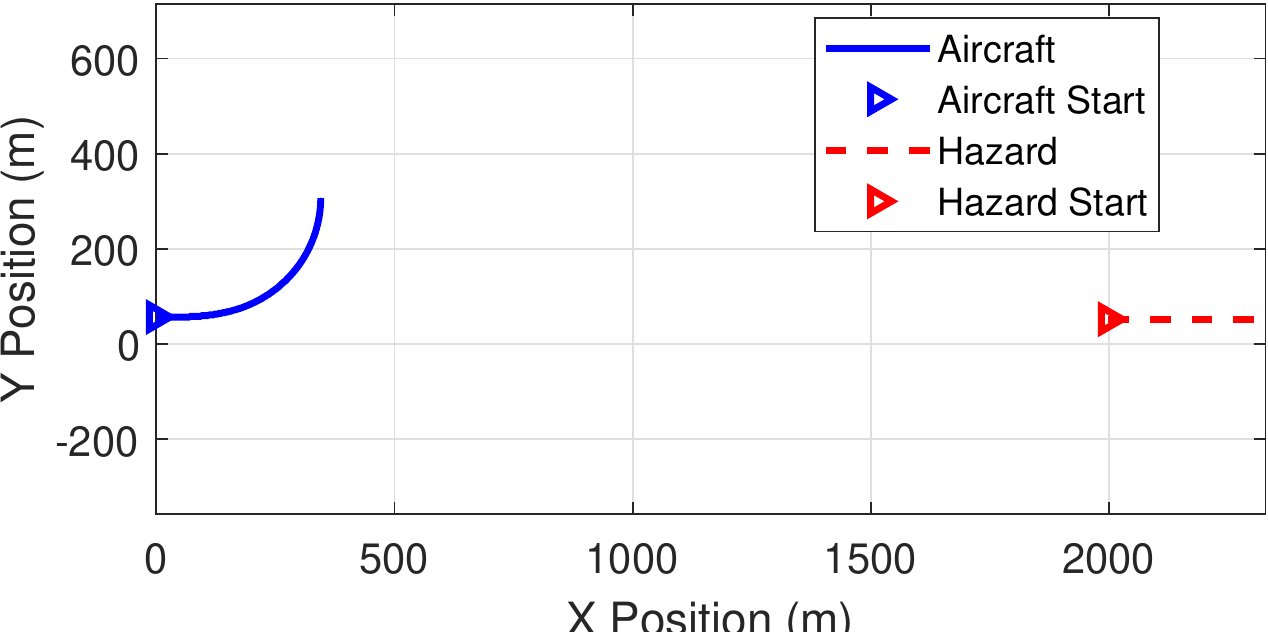}
        \caption{}
    \end{subfigure}
    \begin{subfigure}[b]{3.14in}
        \includegraphics[width=\textwidth]{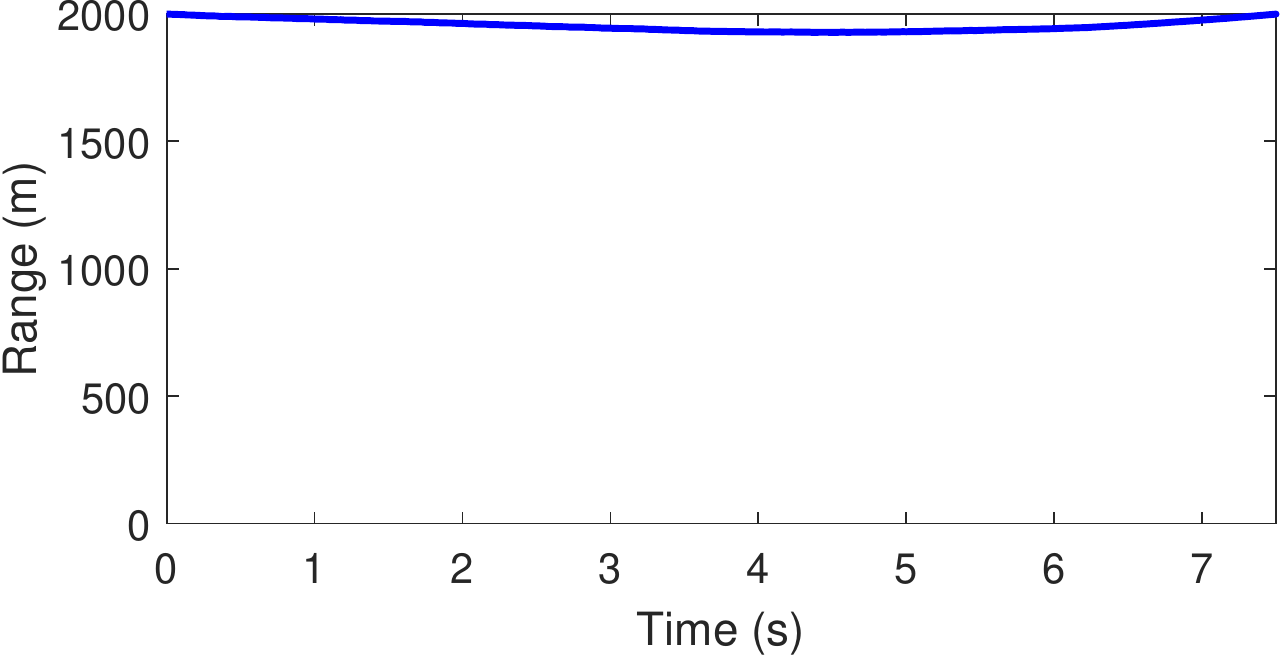}
        \caption{}
    \end{subfigure}
    \caption{Overtaking Cases from initial range 2000m: (a) Trajectories and (b) Range for O1, and (c) Trajectories and (d) Range for O2.}
    \label{fig:OvertakingCases}
\end{figure*}

\begin{figure*}[tb!]
    \centering
    \begin{subfigure}[b]{3.14in}
        \includegraphics[width = \textwidth]{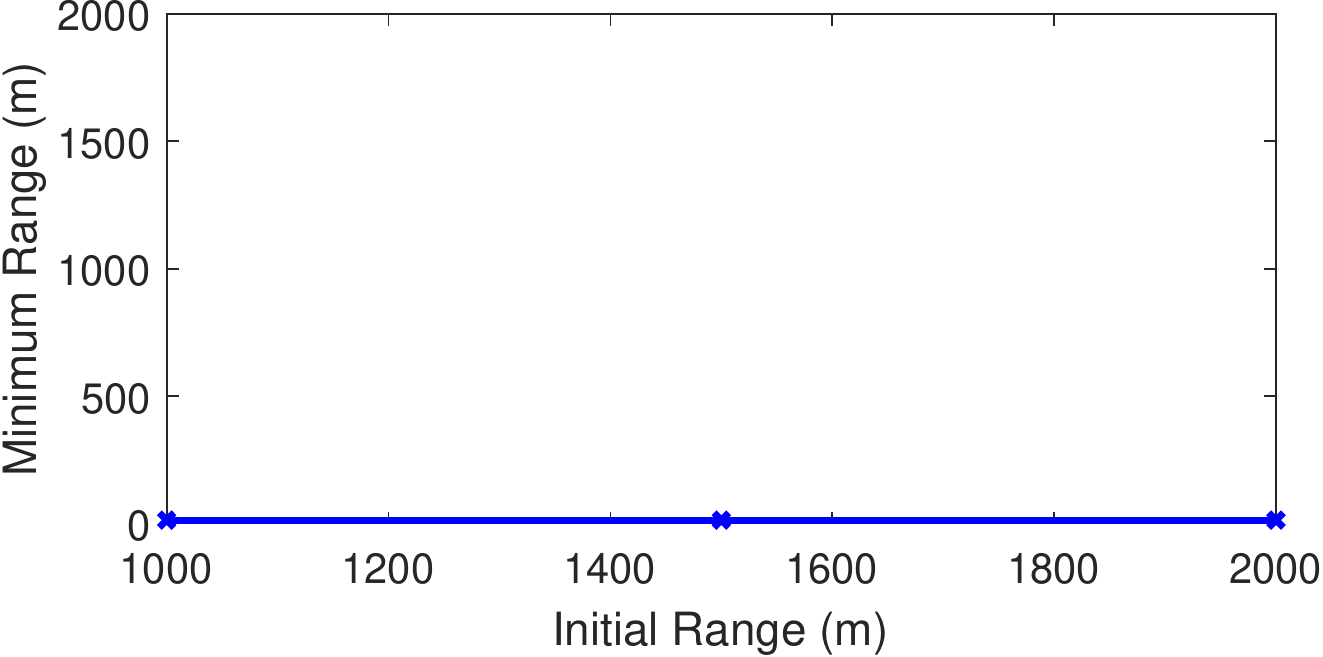}
        \caption{}
    \end{subfigure}
    \begin{subfigure}[b]{3.14in}
        \includegraphics[width=\textwidth]{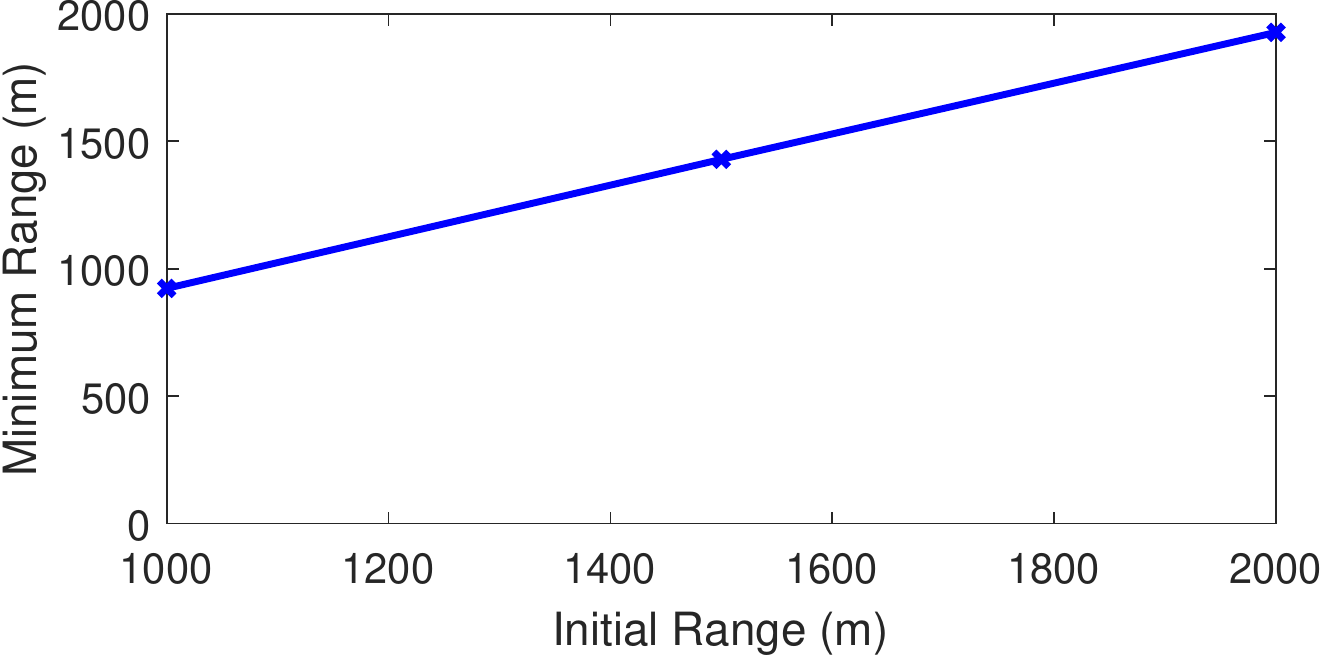}
        \caption{}
    \end{subfigure}\\
    \caption{Overtaking Cases: Minimum Range $r(T)$ versus Initial Range $r(0)$ for (a) Case O1 and (b) Case O2.}
    \label{fig:OvertakingCasesRanges}
\end{figure*}

\subsection{Simulation Discussion and Summary}
A near mid-air collision is typically defined as when the aircraft and hazard come within 500 feet (152.4m) horizontally of each other \cite{Billingsley2012}.
With the exception of the Case O1, the bearing-only strategy \eqref{eq:mainResult} enables the avoidance of near mid-air collisions in the simulated cases of Table \ref{tbl:rtcaCases} from an initial range of 2000m.
In Case O1, the aircraft collides with the hazard because the bearing-only strategy \eqref{eq:mainResult} is concerned that by manoeuvring the aircraft, the hazard can change its course to collide sooner (if the hazard were interested in minimising the miss-distance).
We also note that by maintaining its course whilst being overtaken, the aircraft has obeyed the right of way provisions in the rules of the air (cf.~\cite{ICAO2005}).
The simulations therefore suggest that the bearing-only strategy \eqref{eq:mainResult} is capable of enabling collision avoidance at ranges where existing state-of-the-art detection systems based on electro-optical, infrared, and acoustic sensors are capable of providing aircraft detections (and hence bearing measurements) (cf.~\cite{Yu2015,James2018,James2018a}).

\section{Conclusion}
\label{sec:conclusion}

This paper identifies a novel bearing-only collision avoidance strategy that can be conceived as a solution of a game in which an unmanned aircraft seeks to maximise the miss-distance to an agile hazard capable of instantaneously changing its heading to reduce the miss-distance.
The strategy is specifically shown to arise as the solution to a zero-sum differential game in three dimensions, and to maximise the instantaneous acceleration in range between the aircraft and hazards with finite turn rates.
The significance of the bearing-only strategy lies in its applicability to selecting collision avoidance manoeuvres for unmanned aircraft detect and avoid systems that (either by fault or design) may only have access to a single sensor capable of direct measurement of bearing (e.g. an electro-optical, infrared, or acoustic sensor).
Simulations show that the strategy is capable of avoiding collisions with hazards that may also maintain their course, rather than manoeuvring to minimise the miss-distance.

A number of extensions of the bearing-only strategy presented in this paper and its associated analysis under our differential-game formulation of collision avoidance remain to be explored including: investigation of the impact of discritisation and/or noise on the available bearing angles; the consideration of sensor field of view and aircraft performance constraints (e.g., the prioritisation of horizontal or vertical manoeuvres); and, the consideration of multiple hazards.
Importantly, many of these possible extensions may be viewed as introducing extra constraints into our differential-game formulation of collision avoidance.
This paper has therefore established a theoretical bound on the (unconstrained) performance achievable by bearing-only collision avoidance approaches against which subsequent (constrained) extensions can be evaluated.



\section*{Funding Sources}
This work was supported by the Queensland Government Department of Science, Information Technology and Innovation (DSITI) and Boeing Research And Technology - Australia through an Advance Queensland Research Fellowship (AQRF07616-12RD2) to T.~L.~Molloy.

\section*{Acknowledgments}
We gratefully acknowledge Grace Garden, Dr Neale Fulton, and Prof Jason Ford for fruitful discussions on collision avoidance that helped to set the context of the paper.

\bibliography{sample}

\end{document}